% FOIA REQUESTS ABOUT SUBLIMINALS
% MARK ENVELOPE WITH `ATTENTION: FREEDOM OF INFORMATION UNIT`

\documentclass[12pt]{article}
\usepackage[usenames]{color}
\usepackage{graphicx, subfigure}
\usepackage{amsmath, amsthm, amssymb}
\usepackage{amsfonts}
\usepackage{fullpage}
\usepackage{ifthen}
\usepackage{url}
\usepackage[sort&compress]{natbib}
\usepackage{multirow}
\usepackage{bm}
\usepackage{tikz-qtree}

\usepackage[linesnumbered,algoruled,boxed,lined,commentsnumbered]{algorithm2e}

\newtheorem{theorem}{Theorem}[section]
%\newtheorem{definition}[theorem]{Definition}

%The following command removes the label from the reference list.
\makeatletter
\def\@biblabel#1{}
\makeatother

\theoremstyle{plain}
%[section]
%[section]
%[section]

\theoremstyle{definition}
%[section]
\newtheorem{example}{Example}%[section]

\theoremstyle{remark}

\newcommand{\ex}{{\sf Exp}}

\newcommand{\gam}{{\sf Gamma}}

\title{Stochastic behavior of an $n$-node blockchain under cyber attacks from multiple hackers with random re-setting times}
\author{Xiufeng Xu$^*$\qquad Liang Hong\footnote{Department of Mathematical Sciences, The University of Texas at Dallas, 800 West Campbell Road, Richardson, TX 75080, USA.}}

%\author{Xiufeng Xu\footnote{Xiufeng Xu is a PhD student in the Department of Mathematical Sciences, The University of Texas at Dallas, 800 West Campbell Road, Richardson, TX 75080, USA. Email address: xiufeng.xu@utdallas.edu.} \; and \; Liang Hong\footnote{Liang Hong is an Associate Professor in the Department of Mathematical Sciences,  The University of Texas at Dallas, 800 West Campbell Road, Richardson, TX 75080, USA. Tel.:~972-883-2161. Email address: liang.hong@utdallas.edu.}}
\date{\today}

\usepackage{tikz}
%\usetikzlibrary{shapes,arrows,positioning}
\usetikzlibrary{automata,arrows,positioning,calc}
\usepackage{listings}
\usepackage[labelformat=empty]{caption}
%%%%
\usepackage{subfig}
%%%%
\usepackage{caption}
%%%%
\begin{document}

\maketitle

\begin{abstract}
This paper investigates the stochastic behavior of an $n$-node blockchain which is continuously monitored and faces non-stop cyber attacks from multiple hackers.  The blockchain will start being re-set once hacking is detected, forfeiting previous efforts of all hackers.  It is assumed the re-setting process takes a random amount of time.  Multiple independent hackers will keep attempting to hack into the blockchain until one of them succeeds. For arbitrary distributions of the hacking times, detecting times, and re-setting times, we derive the instantaneous functional probability, the limiting functional probability,  and the mean functional time of the blockchain.  Moreover, we establish that these quantities are increasing functions of the number of nodes,  formalizing the intuition that the more nodes a blockchain has the more secure it is.

\smallskip

\emph{Keywords and phrases:} blockchain; instantaneous functional probability; limiting functional probability; mean functional times; multiple hackers; random re-setting time.
\end{abstract}

\section{Introduction}
The Internet has become an inseparable part of our life.  Nowadays most organizations have an online platform for conducting business with their customers.  The convenience and efficiency provided by this online business mode come with a serious risk: an organization has become more vulnerable to cyber attacks that could cost them a fortune.  The devastating effect of a cyber attack is as fresh as in recent memory,  as evidenced by the Colonial Pipeline ransomware attack in 2021,  the service attack on many major US airports in 2022,  the cyber attack against DP World Australia Port in 2023,  and the Lurie Children's Hospital ransomware attack in 2024.  Each organization must protect its information and maintain its online business.  One solution is to back up all its data continuously.  Unfortunately, this idea is not practical for most organizations due to the astronomical amount of information they have to store.  A viable option is to store sensitive data on a blockchain.  A blockchain consists of several identical computers in a network; each is called a \emph{node} and stores identical data as other nodes do.  All nodes continuously verify the data according to the majority rule---any piece of data is considered valid only if it is consistent with its counterpart on at least half of the nodes.  While a hacker can steal the data stored on a node of a blockchain,  the stolen data is often skillfully encrypted and hence is almost useless to the hacker due to the public key encryption technology (e.g.,  Diffie and Hellman 1976, Rivest et al. 1978, Goldwasser and Micali 1982, Goldwasser et al. 1988.  Therefore, most hackers would either wreak havoc on the organization by altering the data stored on the blockchain or making a ransom demand by locking the data on the blockchain.  The former type of cyber attack, called \emph{cyber destructive attack},  requires a hacker to hack into at least half of the nodes; the latter type, called \emph{cyber ransom attack}, requires the hacker to hack into all nodes.  In view of this,  blockchains do not provide cyber-attack-proof protection, but they reduce an organization's likelihood of incurring cyber losses.  For a comprehensive review of elements of blockchains and their applications, see Tama et al. (2017),  Zheng et al. (2018),  Casino et al. (2019), Kumar et al. (2020), and references therein. 

The security issues of a blockchain have received significant attention from the scientific community; see,  for example,  Corbet et al. (2020),  Hussain et al. (2022), Khanum and Mustafa (2022),  Khan and Salah (2017), Li et al. (2071), Meng et al. (2018), and Tsuchiya et al.  (2021).  However,  there are only a few papers that study the operations-research-theoretic aspect of a blockchain; see,  for instance,  Choi et al. (2020), Melo et al. (2021),  Xu and Hong (2014),  and Zhang et al. (2020). 
In particular,  Xu and Hong (2024) investigate the stochastic behavior of an $n$-node blockchain under cyber attacks from a single hacker.  This article is a further study along this line.  For arbitrary distributions of the detecting times,  hacking times, and re-setting times,  it derives several key quantities of an $n$-node blockchain under cyber attacks from multiple hackers when it takes a random amount of time to re-set the blockchain. 

The remainder of the article is organized as follows.  Section~2 describes the problem under consideration and establishes some notational conventions.  Section~3 investigates the stochastic behavior of an $n$-node blockchain in our setting and gives formulas for the instantaneous functional probability, limiting functional probability,  and mean functional time.  It also establishes that these quantities are increasing functions of the number of nodes. This substantiates the intuition that the more nodes a blockchain has the harder it is to hack into it.  Section~4 provides several numerical examples.  Many technical details of these examples are delegated to the Appendix.  Section~5 concludes the article with some remarks. Python code for all examples is available at \url{https://github.com/xuxiufeng/Blockchain_Simulation_Multiple_Hackers_with_Resetting_Times}.

\section{Model setup and notation}

We consider a blockchain of $n$ identical nodes where $n \geq 2$.   The blockchain is under continuous monitoring (i.e., 24-hour monitoring) of the IT department of an organization for potential cyber attacks.  
%We only consider two major types of cyber attacks: (i)~destructive cyber attacks in which a hacker  alters the data stored on it~(ii)  cyber ransom attacks in which a hacker locks the data on the blockchain and demands ransom.  We assume there are $k$ independent hackers who want to hack into the blockchain.  For the former, a hacker needs to hack into at least $\lfloor n/2\rfloor+1$ nodes; for the latter, a hack must hacks into all the $n$ nodes.  We put
To perform a cyber destructive attack,  a hacker must hack into at least $m=\lfloor n/2\rfloor+1$ nodes.  We will focus on cyber destructive attacks only since the case of a cyber ransom attack is similar except $m$ is replaced by $n$. 
We assume there are $k$ independent hackers where $k\geq 1$, and each of them continuously attacks the blockchain until successfully hacking into the blockchain.  Once a hacker hacks into the blockchain, he will immediately start to alter the data on it.  When that happens, we say the blockchain is \emph{dysfunctional}.  Hence, the blockchain will be functional until at least one of the $k$ hackers successfully hacks into $m$ nodes. The IT department will start re-setting the blockchain once they detect that the blockchain is under attack.  Resetting takes a random amount of time to finish. After that,  the previous effort of each hacker is forfeited. 

Without lost of generality, we label the $n$ identical nodes as node $1$, node $2$, $\dots$,  and node $n$.  For $i=1,2,\dots,n$ and $j=1,2,\dots,k$,  let $X_i^j$ be the time the $j$-th hacker needs to hack into node $i$.  For $j=1, \ldots, k$, we assume that $X_1^j, \ldots, X_n^j$  are independent and identically distributed (iid) according to a (cumulative) distribution function of $F_{X^j_1}$ with a probability density function $f_{X^j_1}(x)$.  Let $Y$ be the time the IT department takes to detect any cyber attack. We assume that $Y$ has a cumulative function $F_Y$ with a probability density function $f_Y(y)$.   Let $W$ be the time it takes for the IT department to re-set the blockchain. We assume that $W$ has a distribution function $F_W$ with a probability density function $f_W(w)$.  Note that $X_i^j,Y,$ and $W$ are mutually independent for $j=1,\dots, k$. We call each $X_i$ a $hacking$ $time$, $Y$ a $detecting$ $time$, and $W$ a re-set time. For the $jth$ hacker, if $\sum_{i=1}^m X_i^j<Y$,  then the $j$-th hacker can change the data on the blockchain. If $\sum_{i=1}^m X_i^j\geq Y$,  then the $jth$ hacker is detected by the IT department, and the blockchain will start being re-set immediately.  It  takes $W$ amount of time to complete the re-setting process. Once the re-setting is finished,  all the $k$ hackers will conduct a new round of cyber attacks from the node $1$.

The stochastic behavior of the blockchain can be described as follows.  At time $t=0$, the blockchain starts to operate and all hackers start to work on hacking into the blockchain.  As all nodes are identical, we may assume that each hacker attacks the $n$ nodes in ascending order. That is, each of them will first attack node~1. If $Y\leq \min_{1\leq j\leq k}X^j_1$,  the IT department will detect the cyber attack before any hacker can hack into node~1. In that case,  the IT department will immediately start re-setting the blockchain, rendering all previous efforts of each hacker in vain. The re-setting process takes $W$ amount of time after which the blockchain operates as if afresh and all hackers start to attack node~1 again. On the other hand,  if $Y>\min_{1\leq j\leq k}X^j_1$, that is, at least one of the $k$ hackers jas successfully bypassed the firewall of node~1, he will immediately move onto attacking node~2.  This process continues until 
$Z_m=\min_{j=1,\dots,k}\bigg(\sum_{i=1}^m X_i^j\bigg)<Y$, i.e.,  one of the $k$ hackers succeeds in hacking into $m$ nodes. 

For given $m$ and $k$, we define the $\emph{instantaneous functional probability}$ of the blockchain to be
\begin{equation}
\label{eq:instantaneous}
P_{mk}(t)=P\{\text{the blockchain has not been hacked at time $t$}\}.
\end{equation}
As $t\rightarrow\infty$,  $P_{mk}(t)$ converges to the \emph{limiting functional probability}  of the blockchain which is defined as
\begin{equation}
\label{eq:limit}
P_{mk}(\infty)=P\{\text{the  blockchain will never be hacked}\}.
\end{equation}

\section{Stochastic behavior of the proposed blockchain model}

\subsection{Limiting functional probability}
To derive the limiting functional probability,  we define a $cycle$ to be the period from the moment the blockchain starts afresh or has just been re-set to the next moment it starts being re-set. Let $N_1$ be the total number of cycles before one of the $k$ hackers successfully hacks into the blockchain, i.e., $Z_m \leq Y$.  Then $N_1$ is a geometric random variable\footnote{Here we interpret a geometric random variable with parameter $p$ to be the number of failures until the first success among a sequence of independent Bernoulli trails with success probability $p$.} with  parameter $1-p_{mk}$, where
\begin{eqnarray}
\label{eq:detectprob}
	p_{mk}&=&1-P\bigg\{\sum_{i=1}^m X_i^1>Y, \sum_{i=1}^m X_i^2>Y,\dots, \sum_{i=1}^m X_i^k>Y\bigg \} \nonumber\\
	&=&1-P\bigg\{\min_{j=1,\dots,k}\bigg(\sum_{i=1}^m X_i^j\bigg)>Y\bigg \} \nonumber \\
	&=&1-P(Z_m>Y)
	=P(Z_m\leq Y)  \nonumber \\
	&=&\int_0^{\infty} F_{Z_m}(s) dF_{Y}(s),
\end{eqnarray}
where $F_{Z_m}$ is the distribution function of $Z_m$.  The continuous-time process seems to be mathematically intractable.  However,  if we focus only on those moments the blockchain starts afresh,  has just been re-set, or has just been hacked,  then we can identify an embedded three-state discrete-time Markov chain; see the following diagram.

\begin{center}
	\begin{tikzpicture}[->, >=stealth', auto, semithick, node distance=3cm]
	\tikzstyle{every state}=[fill=white,draw=black,thick,text=black,scale=2]
	\node[state]    (A)                     {$1$};     
	\node[state]    (B)[right of=A]   {$2$};
	\node[state]     (C)[left of=A] {$3$};
	\path
	(A) edge[bend left,above]			node{$1-p_{mk}$}(C)
	       edge[bend left,above]	node{$p_{mk}$}	(B)
	(B) edge[loop above]          node{$1$}(B)
	(C) edge[bend left,above]           node{$1$}(A);   
	\end{tikzpicture}          
\end{center}
\begin{center}
	Figure 1: Three-state semi-Markov process
\end{center}
In State $1$,  the blockchain is functional, i.e., none of the $k$ hackers has hacked into $m$ nodes.  In State $3$, the blockchain is dysfunctional (i.e., it has been hacked).  In State $2$,  the blockchain is being re-set.  State $2$ is an absorbing state.  The transition matrix of this Markov chain is given by (\ref{eq:transition}). 
\begin{equation}
\label{eq:transition}
	\begin{bmatrix}
0 & p_{mk} & 1-p_{mk}\\
0 & 1 & 0\\
1 & 0 & 0
\end{bmatrix}
\end{equation}
Condition on the outcome of the first cycle to obtain
\[P_m(\infty)=(1-p_{mk})\times 1\times P_{mk}(\infty)+p_{mk}\times 0.\]
It follows that $P_m(\infty)=0$.  Intuitively,  this says that if $p_{mk}>0$ then the blockchain  will eventually be hacked.

%If there exist positive integer $k_1>k_2$, we have
%\begin{eqnarray*}
%p_{mk_1}-p_{mk_2}&=&P\bigg\{\min_{j=1,\dots,k_2}\bigg(\sum_{i=1}^m X_i^j\bigg)>Y\bigg \}-P\bigg\{\min_{j=1,\dots,k_1}\bigg(\sum_{i=1}^m X_i^j\bigg)>Y\bigg \}\\
%&\geq&0,
%\end{eqnarray*}
%since
%\begin{eqnarray*}
%\min_{j=1,\dots,k_2}	\sum_{i=1}^m X_i^j \geq \min_{j=1,\dots,k_1}	\sum_{i=1}^m X_i^j.
%\end{eqnarray*}
%Thus, $p_{mk}$ is a increasing function of $k$. 

%%
\subsection{Mean functional time}
To derive the mean functional time of blockchain, we identify a renewal process $\{N_2(t)\}_{t\geq 0}$ as follows:
\[N_2(t)= \max\left\{k: \sum_{i=1}^k (Y_i+W_i)\leq t \right\},\]
where $Y_i\overset{d}{=}Y\mid Y<Z_m$, i.e., $Y_i$ and the conditional random variable $Y\mid Y<Z_m$ have the same distribution. 
We put
\begin{equation}
\label{eq:totaltime}
T_m=\sum_{i=1}^{N_1} (Y_i +W_i)+\left(Z_m  \middle|\ Z_m<Y\right).
\end{equation}
Then $T_m$ is the functional time of the blockchain.
Since $N_1$ is independent of $Y_{n+1}, Y_{n+2}, \ldots$,  $N_1$ is a stopping time for the sequence
$\{Y_i\}_{i\geq 1}$.  It follows from Wald's Identity that
\begin{eqnarray}
\label{eq:meantime}
E[T_m] &=& E[N_1] (E[Y_1]+E[W_1])+E \left[ Z_m \middle|\ Z_m<Y\right ] \nonumber \\
     &=& E[N_1]\bigg\{ E\left [Y\middle|\ Y<Z_m \right]+E[W_1]\bigg\}+E \left[ Z_m \middle|\ Z_m<Y\right ].
\end{eqnarray}

\subsection{Instantaneous functional probability}
To obtain the instantaneous functional probability, we will establish a renewal-type equation. To this end, we 
define  $S_t$ to be the last moment the blockchain is re-set before time $t$. That is, 
\[S_t=\sum_{i=1}^{N_2(t)} Y_i+W_i.\]
Let $F_{S_t}$ be the distribution function of $S_t$.  Also, we use $G(s)=E[{N_2(s)}]$ to denote the renewal function of $\{N_2(t)\}_{t\geq 0}$. Then
	\[G(s)=\sum_{k=1} ^\infty P\left(\sum_{i=1}^k Y_i+W_i \leq s\right ).\]
%	If $N_2(t)=0$, $S_t=0$.
Moreover,  we have %Lemma~3.4.3 of Ross 1996
\begin{eqnarray*}
	P(S_t=0)&=&P(Y_1+W_1>t)\\
	dF_{S_t}(s)&=&P(Y_1+W_1>t-s)dG(s), \quad \text{$0<s<\infty$}.
\end{eqnarray*}
To obtain $P_m(t)$, we need to derive the probability that the blockchain is functional at $t$ and the probability that the blockchain is re-setting at $t$.  For the former,  we have 
\begin{eqnarray*}
	&&P\{\text{the blockchain is functional at t}\}\\
=&&P\{\text{the blockchain is functional at $t$}\mid S_t=0\}P\{S_t=0\}\\
 &&+\int_0^tP\{\text{the blockchain is functional at $t$}\mid S_t=s\}dF_{S_t}(s)\\
 =&&P\left\{t<Z_m \wedge Y\right\}+\int_0^tP\left\{t-s<Z_m\wedge Y\right\}dG(s),
\end{eqnarray*}
where $dF_{S_t}(s)=P\{t-s<Y_1+W_1\}dG(s)$ and $G(s)=\sum_{k=1} ^\infty P\left(\sum_{i=1}^k Y_i+W_i \leq s\right )$. %Note that $W_i> 0$ almost surely. 
For the latter, we have
\begin{eqnarray*}
	&&P\{\text{the blockchain is re-setting at t}\}\\
=&&P\{\text{the blockchain is re-setting at $t$}\mid S_t=0\}P\{S_t=0\}\\
 &&+\int_0^tP\{\text{the blockchain is re-setting at $t$}\mid S_t=s\}dF_{S_t}(s)\\
 =&&P\left\{Y\leq t<Y+W_1,Y\leq Z_m \right\}\\
 &&+\int_0^tP\left\{Y\leq t-s<Y+W_1 ,Y\leq Z_m \right\}dG(s).\\
\end{eqnarray*} 
Therefore, 
\begin{eqnarray*}
\label{eq:instprob}
P_{mk}(t) 
&=& P\{\text{the blockchain has not been hacked at $t$}\mid S_t=0\}P\{S_t=0\}\nonumber\\
     & & +\int_0^t P\{ \text{the blockchain has not been hacked at $t$}\mid S_t=s\}dF_{S_t}(s)\nonumber\\
\end{eqnarray*}
\begin{eqnarray}
     &=&P\{\text{the blockchain is functional at $t$}\mid S_t=0\}P\{S_t=0\}\nonumber\\
     &&+P\{\text{the blockchain is re-setting at $t$}\mid S_t=0\}P\{S_t=0\}\nonumber\\
     &&+\int_0^tP\{\text{the blockchain is functional at $t$}\mid S_t=s\}dF_{S_t}(s)\nonumber\\
     &&+\int_0^tP\{\text{the blockchain is re-setting at $t$}\mid S_t=s\}dF_{S_t}(s)\nonumber\\
     &=&P\left\{t<Z_m \wedge Y\right\}+P\left\{Y\leq t<Y+W_1 ,Y\leq Z_m \right\}\nonumber\\
     &&+\int_0^tP\left\{t-s<Z_m \wedge Y\right\}dG(s)\nonumber\\
     &&+\int_0^tP\left\{Y\leq t-s<Y+W_1 ,Y\leq Z_m\right\}dG(s).
\end{eqnarray}

%If $W_i\equiv 0$,
%\begin{eqnarray*}
%	P\left\{Y\leq t<Y+W_1 ,Y\leq Z_m \right\}=P\left\{Y\leq t-s<Y,Y\leq Z_m\right\}\equiv0.
%\end{eqnarray*}
%Thus,
%\begin{eqnarray*}
%	P_{mk}(t)
 %    &=&P\left\{t<Z_m \wedge Y\right\}+\int_0^tP\left\{t-s<Z_m\wedge Y\right\}dG(s),\\
%\end{eqnarray*}
%where $G(s)=\sum_{k=1} ^\infty P\left(\sum_{i=1}^k Y_i+W_i \leq s\right )\stackrel{W_i \equiv 0}{=}\sum_{k=1} ^\infty P\left(\sum_{i=1}^k Y_i \leq s\right )$.  \\
%Under this condition, we have
%\begin{eqnarray*}
%0&\leq &P_{mk}(\infty)=\lim_{t \rightarrow \infty}P_{mk}(t)\\
% &=& \lim_{t \rightarrow \infty}P\left\{t<Z_m \wedge Y\right\}+ \lim_{t \rightarrow \infty} \int_0^t P\{Y \wedge Z_m>t-s \} dG(s)\\
% &=&  \lim_{t \rightarrow \infty} \int_0^t P\{Y \wedge Z_m>t-s \} dG(s)\\
 %  &=&     \lim_{t \rightarrow \infty} \int_0^{\infty} P\{Y \wedge Z_m>t-s \}1_{[0, t]}(s) dG(s)\\
 %  &\leq & \lim_{t \rightarrow \infty} \int_0^{\infty} P\{Y \wedge Z_m>t-s\} dG(s)\\
  % &= &  \int_0^{\infty} \lim_{t \rightarrow \infty} P\{Y \wedge Z_m>t-s \} dG(s)=0, \\
%\end{eqnarray*}
%where the last equality follows from the monotone convergence theorem for sequences of decreasing measurable %functions. This gives another proof that $P_m(\infty)=0$ if $W_i\equiv 0$.\\

%If $k\equiv 1$ and $W_i\equiv 0$, we have
%\begin{eqnarray*}
%	P_{mk}(t)
 %    &=&P\left\{t<\sum_{i=1}^m X_i^1 \wedge Y\right\}+\int_0^tP\left\{t-s<\sum_{i=1}^m X_i^1\wedge Y\right\}dG(s),\\
%\end{eqnarray*}
%where $G(s)=\sum_{k=1} ^\infty P\left(\sum_{i=1}^k Y_i \leq s\right )$. This is the single hacker without re-setting time %scenario. \\

%%
\subsection{Monotonicity of $P_m(t)$ and $E[T_m]$}
According to our intuition, the more nodes a blockchain has, the harder it is to hack into it.  Formally, this amounts to saying that  $P_m(t)$,  $P_m(\infty)$, and $E[T_m]$ all are increasing functions of $m$.  This is indeed the case, as the next theorem shows.
\begin{theorem}
\label{theorem3.1}
$P_{mk}(t)$,  $P_{mk}(\infty)$, and $E[T_m]$ are all increasing functions of  $m$, where $m=\lfloor n/2\rfloor+1$. 
\end{theorem}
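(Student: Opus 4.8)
The plan is to reduce all three monotonicity claims to a single stochastic-ordering fact: that $Z_m$ is stochastically increasing in $m$. Indeed, $Z_m = \min_{1\le j\le k}\sum_{i=1}^m X_i^j$, and adding one more node replaces each $\sum_{i=1}^m X_i^j$ by $\sum_{i=1}^{m+1}X_i^j = \sum_{i=1}^m X_i^j + X_{m+1}^j \ge \sum_{i=1}^m X_i^j$ almost surely (hacking times are nonnegative). Taking the minimum over $j$ preserves this pointwise inequality, so $Z_{m+1}\ge Z_m$ almost surely on a common probability space, hence $Z_{m+1}\succeq_{\mathrm{st}} Z_m$ and in particular $F_{Z_{m+1}}(s)\le F_{Z_m}(s)$ for all $s$. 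I would state and prove this as the first step, since every subsequent estimate leans on it.

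Next I would handle the three quantities in increasing order of difficulty. For $P_{mk}(\infty)$ the argument is immediate from \eqref{eq:detectprob}: $p_{mk}=\int_0^\infty F_{Z_m}(s)\,dF_Y(s)$ is decreasing in $m$ because the integrand is, and $P_{mk}(\infty)=0$ identically — so strictly speaking it is (weakly) increasing trivially, but the economically meaningful statement is that $p_{mk}$, the per-cycle hacking probability, decreases, which I would remark on. For $P_{mk}(t)$ I would work from the renewal-type formula (the last display before \S3.4). Each of the four terms is monotone in $m$ in the right direction: $P\{t<Z_m\wedge Y\}$ and $P\{Y\le t<Y+W_1,\ Y\le Z_m\}$ are increasing in $m$ because enlarging $Z_m$ (stochastically, via the a.s.\ coupling) can only help the events $\{t<Z_m\}$ and $\{Y\le Z_m\}$; the same holds for the two integrands against $dG(s)$. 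The one subtlety is that $G(s)=E[N_2(s)]$ — the renewal function of the inter-reset times $Y_i+W_i$ with $Y_i\overset{d}{=}Y\mid Y<Z_m$ — itself depends on $m$, so I cannot treat $dG$ as a fixed measure; I would need the auxiliary claim that $Y\mid Y<Z_{m+1}$ is stochastically larger than $Y\mid Y<Z_m$ (intuitively, a harder-to-hack chain is detected later on average given detection occurs), which again follows from the $Z_m$ coupling together with a monotone-likelihood-ratio-type comparison of the conditional laws, and then a standard renewal-theory fact that larger inter-arrival times give a pointwise smaller renewal function $G$. Combining, all four terms move the same way.

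For $E[T_m]$ I would start from \eqref{eq:meantime}: $E[T_m]=E[N_1]\{E[Y\mid Y<Z_m]+E[W_1]\}+E[Z_m\mid Z_m<Y]$, with $E[N_1]=p_{mk}/(1-p_{mk})$. Here $E[N_1]$ is decreasing in $m$ (since $p_{mk}$ is), which pushes the product term the wrong way, so this is where the real work sits — the gain in $E[Z_m\mid Z_m<Y]$ and in $E[Y\mid Y<Z_m]$ must dominate the loss in the number of cycles. I expect the cleanest route is not term-by-term but to rewrite $E[T_m]$ via a renewal-reward / regenerative-cycle identity: over one full cycle the expected functional time contributed is $E[\min(Z_m,Y)]$ (the chain is up until either detection or a successful hack), detection ends a cycle with probability $1-p_{mk}$, and the expected number of cycles until absorption is $1/(1-p_{mk})$, giving $E[T_m]=E[\min(Z_m,Y)]/(1-p_{mk})$ after accounting for the re-set times — or, more carefully matching \eqref{eq:totaltime}, an expression whose numerator is manifestly increasing in $m$ (both $\min(Z_m,Y)$ increases a.s.\ and $1-p_{mk}=P(Z_m>Y)$ increases). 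That reformulation is the main obstacle: I must verify it is algebraically equivalent to \eqref{eq:meantime} and that the $E[W_1]$ contribution, weighted by the expected number of \emph{completed} re-set cycles $E[N_1]$, does not spoil monotonicity — it does not, because $E[N_1]=1/(1-p_{mk})-1$ is decreasing while it multiplies the constant $E[W_1]\ge 0$, so that piece alone is decreasing, and one needs the strict gain from the $\min(Z_m,Y)$ term to be large enough; writing $E[T_m]$ as $\big(E[\min(Z_m,Y)]+(1-p_{mk})E[W_1]\big)/(1-p_{mk}) - E[W_1]\cdot\mathbf{1}$ isolates the troublesome constant and shows it cancels, leaving $E[\min(Z_m,Y)]/(1-p_{mk})$ plus terms that are clearly monotone. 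I would close by noting all inequalities are weak unless the $X_i^j$ have unbounded support, in which case they are strict.
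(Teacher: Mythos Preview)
Your opening coupling $Z_{m+1}\ge Z_m$ a.s.\ is exactly what the paper uses, and your remark that $P_{mk}(\infty)$ is trivially (weakly) increasing is correct---indeed more accurate than the paper, which writes a strict inequality between two quantities it has already shown equal zero.

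For $E[T_m]$ your argument derails on a sign: $N_1$ counts the detection-and-reset cycles before the first successful hack, so $E[N_1]=(1-p_{mk})/p_{mk}$, which is \emph{increasing} in $m$ as $p_{mk}$ falls. There is therefore no term ``pushing the wrong way,'' and your proposed identity has the denominator inverted. The correct one-line simplification (which the paper reaches but does not exploit) is
\[
E[T_m]\;=\;\frac{E[Y\mathbf{1}_{\{Y\le Z_m\}}]+E[Z_m\mathbf{1}_{\{Z_m<Y\}}]}{p_{mk}}+\frac{1-p_{mk}}{p_{mk}}\,E[W_1]\;=\;\frac{E[\min(Y,Z_m)]}{p_{mk}}+\Bigl(\tfrac{1}{p_{mk}}-1\Bigr)E[W_1],
\]
and both summands are now manifestly increasing in $m$. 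The paper instead computes $E[T_{m+1}]-E[T_m]$ by brute force, splitting indicators via $\{Y<Z_{m+1}\}=\{Y<Z_m\}\cup\{Z_m\le Y<Z_{m+1}\}$ and regrouping; your route, once the algebra is fixed, is cleaner.

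For $P_{mk}(t)$ you correctly flag that $G$ depends on $m$ through the law of $Y\mid Y<Z_m$; the paper's proof silently treats $dG$ as fixed while comparing integrands, so you have spotted a real gap there. But your patch does not close it: showing the inter-reset times grow stochastically gives $G_{m+1}\le G_m$, and a larger integrand against a \emph{smaller} measure need not yield a larger integral. The clean repair is to push your coupling further: use the same $Y$'s, $W$'s, and $X_i^{j}$'s cycle by cycle for both the $m$- and $(m{+}1)$-chains, check that on every sample path the $m$-chain is hacked no later than the $(m{+}1)$-chain, and conclude $T_{m+1}\ge T_m$ a.s., whence $P_{mk}(t)=P(T_m>t)$ is increasing in $m$ without touching the renewal formula at all.
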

\begin{proof}
%Trivially, since each hacking time $X_i^j$ is positive almost surely, we have
%\begin{eqnarray*}
%	P(Z_{m+1}>Z_m)=1.
%\end{eqnarray*}
First, the definition of $Z_m$ implies 
\begin{equation*}
P\{t<Y \wedge Z_{m+1} \}> P\{t<Y \wedge Z_m\},
\end{equation*}
and
\begin{equation*}
P\{t-s<Y \wedge Z_{m+1} \}> P\{t-s<Y \wedge Z_m\}.
\end{equation*}
Similarly, we have
\begin{equation*}
P\left\{Y\leq t<Y+W_1 ,Y\leq\ Z_{m+1}\right\} > P\left\{Y\leq t<Y+W_1 ,Y\leq Z_m\right\}	
\end{equation*}
and
\begin{equation*}
P\left\{Y\leq t-s<Y+W_1 ,Y\leq Z_{m+1}\right\} > P\left\{Y\leq t-s<Y+W_1 ,Y\leq Z_m\right\}.
\end{equation*}
Then it follows from (\ref{eq:instprob}) that 
\begin{equation}
\label{eq:instantaneous3}
 	P_{m+1,k}(t)> P_{mk}(t), \quad \text{for all positive integer $m$ and $t$},
\end{equation}
which also implies
\begin{equation}
 	P_{m+1,k}(\infty)=\lim_{t\rightarrow \infty}P_{m+1,k}(t)>\lim_{t\rightarrow \infty}P_{mk}(t)=P_{mk}(\infty).
 \end{equation}
To show $E[T_m]$ is increasing in $m$,  we write it as
\begin{eqnarray*}
E[T_m] &=& \frac{1-p_{mk}}{p_{mk}}\times \left\{\frac{E[Y1_{\{Y\leq Z_m \}}]}{1-p_{mk}}+E[W_1]\right\}+\frac{E[Z_m1_{\{Y>Z_m\}}]}{p_{mk}}\\
&=& \frac{E[Y1_{\{Y\leq Z_m\}}]}{p_{mk}}+\frac{E[Z_m1_{\{Y>Z_m\}}]}{p_{mk}}+\frac{1-p_{mk}}{p_{mk}}\times E[W_1].
\end{eqnarray*}
Thus,
\begin{eqnarray*}
	&&E[T_{m+1}]-E[T_m]\\
	&=&  (p_{mk}p_{m+1,k})^{-1} \bigg \{ p_{mk} E[Y1_{\{Y\leq Z_{m+1} \}}]-p_{m+1,k}E[Y1_{\{Y\leq Z_m \}}] \\
& &+p_{mk} E\bigg[Z_{m+1}1_{\{Y>Z_{m+1}\}}\bigg]-p_{m+1,k}E\bigg[Z_m1_{\{Y>Z_m\}}\bigg]\\
&&+(p_{mk}-p_{m+1,k})E[W_1] \bigg\}.\\
\end{eqnarray*}
The first four terms in the brackets can be written as 
\begin{eqnarray*}
& & p_{mk} E[Y1_{\{Y<Z_m\}}]-p_{m+1,k}E[Y1_{\{Y<Z_m \}}]+p_{mk} E[Y1_{\{Z_m<Y<Z_{m+1}\}}] \\
& & + p_{mk} E[(Z_{m+1}-Z_m)1_{\{Y>Z_{m+1}\}}]+ p_{mk} E[Z_m1_{\{Y>Z_{m+1}\}}]\\
& &-p_{m+1,k} E[Z_m1_{\{Y>Z_{m+1}\}}] 
 -p_{m+1,k} E[Z_m1_{\{Z_{m+1}>Y>Z_m\}}]\\
&=& (p_{mk}-p_{m+1,k})E[Y1_{\{Y<Z_m \}}]+(p_{mk}-p_{m+1,k})E[Z_m1_{\{Y>Z_{m+1}\}}]\\
& &  +p_{mk} E[Y1_{\{Z_{m+1}>Y>Z_m\}}]\ -p_{m+1,k} 
E[Z_m1_{\{Z_{m+1}>Y>Z_m\}}] \\
& & +p_{mk} E[(Z_{m+1}-Z_m)1_{\{Y>Z_{m+1}\}}] \\
&\geq& (p_{mk}-p_{m+1,k})E[Y1_{\{Y<Z_m \}}]+(p_{mk}-p_{m+1,k})E[Z_m1_{\{Y>Z_{m+1}\}}]\\
& &   +p_{mk} E[Z_m1_{\{Z_{m+1}>Y>Z_m\}}]\ -p_{m+1,k} 
E[Z_m1_{\{Z_{m+1}>Y>Z_m\}}]\\
& & +p_{mk} E[(Z_{m+1}-Z_m)1_{\{Y>Z_{m+1}\}}]\\
&=& (p_{mk}-p_{m+1,k})E[Y1_{\{Y<Z_m \}}]+(p_{mk}-p_{m+1,k})E[Z_m1_{\{Y>Z_{m+1}\}}]\\
& &   +(p_{mk}-p_{m+1,k}) E[Z_m1_{\{Z_{m+1}>Y>Z_m\}}] +p_{mk} E[(Z_{m+1}-Z_m)1_{\{Y>Z_{m+1}\}}].
\end{eqnarray*}
(\ref{eq:detectprob}) implies that $p_{mk}$ is decreasing in $m$. It is also clear that $Z_{m+1}-Z_m > 0$ almost surely. Hence, $ E[T_{m+1}]-E[T_m] >0$.
\end{proof}

%\begin{theorem}
%	If $k\equiv 1$ and $W\equiv 0$, $P_m(\infty)=0$, where $1\leq m \leq n$.
%\end{theorem}
%\begin{proof}
%We already give the first proof in section $3.2$. For the second proof, we have, for $P_{mk}(t)$, the limit of the first two terms can be written as
%\begin{eqnarray*}
%	0&\leq&\lim_{t\rightarrow \infty}P\{t<Z_m\wedge Y \}+\lim_{t\rightarrow \infty}\int_0^t P\{Y \wedge Z_m>t-s \} dG(s)\\
%	&=&\lim_{t\rightarrow \infty}P\left\{t<\sum_{i=1}^m X_i \wedge Y\right\}+\int_0^t \lim_{t\rightarrow \infty}P\{Y \wedge Z_m>t-s \} dG(s)\\
%	&=&0,\\
%\end{eqnarray*}
%It follows from the monotone convergence theorem for sequences of decreasing measurable functions. Therefore, it is clear that 
%\[P_m(\infty)=0.\]
%\end{proof}

\begin{theorem}
\label{theorem3.2}
$\lim_{m\rightarrow\infty}P_m(t)=1$ and $\lim_{m\rightarrow\infty}E[T_m]=\infty$, where $m=\lfloor n/2\rfloor+1$. 
\end{theorem}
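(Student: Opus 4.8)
The plan is to drive both statements by the single fact that $Z_m = \min_{1\le j\le k}\sum_{i=1}^m X_i^j \to \infty$ almost surely as $m\to\infty$. Indeed, for each fixed $j$, $\sum_{i=1}^m X_i^j$ is a sum of $m$ i.i.d. positive random variables, hence tends to $\infty$ a.s. (and in fact its mean is $m\,\E[X_1^j]$, which diverges); since $k$ is finite, the minimum over $j$ also tends to $\infty$ a.s. Consequently $F_{Z_m}(s)\to 0$ for every fixed $s$, and $Z_m\wedge Y \to Y$ a.s.

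For the first claim, I would start from the renewal-type formula \eqref{eq:instprob}. Fix $t$. As $m\to\infty$, monotone/dominated convergence gives $P\{t<Z_m\wedge Y\} \to P\{t<Y\}$ and, for each $s\in(0,t)$, $P\{t-s<Z_m\wedge Y\}\to P\{t-s<Y\}$; likewise the two ``re-setting'' probabilities converge to $P\{Y\le t<Y+W_1\}$ and $P\{Y\le t-s<Y+W_1\}$ (the constraint $Y\le Z_m$ becomes vacuous in the limit). One must check that $G$ does not blow up: since $p_{mk}=P(Z_m\le Y)\to 0$, the interarrival times $Y_i+W_i$ (with $Y_i\overset{d}{=}Y\mid Y<Z_m$) are eventually bounded below in a stochastic sense, so the renewal function $G$ on $[0,t]$ stays finite; applying dominated convergence under the $dG$ integrals then yields $P_m(t)\to P\{t<Y\}+P\{Y\le t<Y+W_1\}+\int_0^t(\cdots)\,dG(s)$. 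The cleanest route, though, is to observe that the limiting right-hand side is exactly the instantaneous functional probability of a degenerate blockchain that is \emph{never} hacked (only re-set when $Y$ fires), for which the blockchain is always either functional or re-setting, so the total equals $1$; alternatively, combine monotonicity from Theorem \ref{theorem3.1} (so $P_m(t)$ is increasing in $m$ and bounded by $1$, hence convergent) with the pointwise limits above to identify the limit as $1$.

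For the second claim, I would use the rewritten expression for $E[T_m]$ from the proof of Theorem \ref{theorem3.1},
\[
E[T_m] \;=\; \frac{E[Y\mathbf{1}_{\{Y\le Z_m\}}]}{p_{mk}} \;+\; \frac{E[Z_m\mathbf{1}_{\{Y>Z_m\}}]}{p_{mk}} \;+\; \frac{1-p_{mk}}{p_{mk}}\,E[W_1].
\]
Since $p_{mk}\to 0$ while $E[W_1]>0$ is fixed, the last term alone tends to $\infty$; the other two terms are nonnegative, so $E[T_m]\to\infty$. (If $E[W_1]=0$ is allowed, one instead notes $E[Z_m\mathbf{1}_{\{Y>Z_m\}}]\ge E[Z_m\mathbf 1_{\{Y>Z_m\}}]$ grows because... actually in that degenerate case $W\equiv 0$ and the bound still follows from $E[Y\mathbf 1_{\{Y\le Z_m\}}]/p_{mk}\ge \E[Y\mathbf 1_{\{Y\le Z_m\}}]/p_{mk}$ together with $p_{mk}\to 0$ and $E[Y\mathbf 1_{\{Y\le Z_m\}}]\to \E[Y\mathbf 1_{\{Y<\infty\}}]>0$; but typically $W$ is a genuine re-set time with $E[W_1]>0$, so the one-line argument suffices.)

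The main obstacle I anticipate is the rigorous justification of passing to the limit inside the $dG(s)$ integrals in \eqref{eq:instprob}: $G=G_m$ itself depends on $m$ through the law of $Y_i\overset{d}{=}Y\mid Y<Z_m$, and as $m\to\infty$ this conditional law degenerates (the conditioning event has vanishing probability), so one cannot naively interchange limit and integral. The fix is to bound $G_m$ on the compact interval $[0,t]$ uniformly in $m$ — e.g. via $G_m(s)\le \sum_{\ell\ge 1}P(\sum_{i=1}^\ell Y_i\le s)$ and the fact that $Y_i$ conditioned on $\{Y<Z_m\}$ is stochastically at least as large as some fixed nondegenerate variable once $p_{mk}$ is small — and then apply dominated convergence; alternatively, sidestep the issue entirely by invoking monotonicity (Theorem \ref{theorem3.1}) to get convergence of $P_m(t)$ for free and only needing a lower bound $P_m(t)\ge P\{t<Z_m\wedge Y\}+P\{Y\le t<Y+W_1,\,Y\le Z_m\}\to P\{t<Y\}+P\{Y\le t<Y+W_1\}$, which already shows the limit is $\ge$ the functional-plus-resetting probability of the never-hacked chain, namely $1$.
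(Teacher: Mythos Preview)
Your argument for $E[T_m]\to\infty$ is essentially the paper's: both keep one positive summand in \eqref{eq:meantime} and use $p_{mk}\to 0$. You use the term $\frac{1-p_{mk}}{p_{mk}}E[W_1]$; the paper instead keeps $E[N_1]\,E[Y\mid Y<Z_m]$ and observes $E[Y\mid Y<Z_m]\to E[Y]>0$. Either route works.

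For $P_m(t)\to 1$ the paper takes a much shorter path than you do: it simply notes that $p_{mk}\to 0$ means the embedded discrete-time Markov chain, in the limit, never enters the absorbing ``hacked'' state, so at every time $t$ the blockchain is either functional or being re-set, whence $P_m(t)\to 1$. Your renewal-formula approach via \eqref{eq:instprob} is more analytic and could be made to work, but as you yourself flag, $G=G_m$ depends on $m$ and the limit passage under $dG_m$ is not carried out.

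There is a genuine gap in your proposed ``sidestep'' at the end. Keeping only the $S_t=0$ contributions in \eqref{eq:instprob} gives
\[
P_m(t)\;\ge\;P\{t<Z_m\wedge Y\}+P\{Y\le t<Y+W_1,\,Y\le Z_m\},
\]
and as $m\to\infty$ the right-hand side tends to
\[
P\{t<Y\}+P\{Y\le t<Y+W_1\}\;=\;1-P\{Y+W_1\le t\},
\]
which is \emph{strictly less than $1$} whenever $t$ lies in the support of $Y+W$. So this lower bound does not identify the limit as $1$; the $dG_m$ integrals cannot be discarded. To rescue your approach you would need to show $G_m\to G_\infty$ (the renewal function of the unconditioned interarrival $Y+W$) on $[0,t]$ and then recognize the full limiting expression as the probability that an alternating $Y/W$ process is in \emph{some} state at time $t$, which is $1$. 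The paper avoids all of this by appealing directly to the embedded chain.
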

\begin{proof}
It is clear from (\ref{eq:detectprob}) that $\lim_{m \rightarrow \infty}p_{mk}=0$.  This is equivalent to saying that  the embedded discrete-time Markov chain will never be  State $2$ almost surely in the limit.  Therefore, $\lim_{m \rightarrow \infty}P_m(t)=1$.  

Since $\lim_{m \rightarrow \infty}p_{mk}=0$, $E[N_1]=1/p_{mk}-1\rightarrow \infty$ as $m\rightarrow\infty$.  It follows from (\ref{eq:meantime}) that 
\begin{eqnarray*}
\lim_{m \rightarrow \infty}E[T_m ]
%&=& \lim_{n \rightarrow \infty}\left( E[N_1] E\left [Y\middle|\ Y<\sum_{i=1}^m X_i \right]+E \left[ \sum_{i=1}^m X_i \middle|\ \sum_{i=1}^m X_i<Y\right ] \right) \\
&\geq&	\lim_{m \rightarrow \infty}E[N_1] E\left [Y\middle|\ Y<Z_m\right]\\
&=& \lim_{m \rightarrow \infty}E[N_1] \lim_{m \rightarrow \infty}\left[ \frac{E[Y1_{\{Y<Z_m\}} ]}{P\left\{ Y<Z_m \right\}}\right] \\
&=&\lim_{m \rightarrow \infty}E[N_1] E[Y]\\
&=& \infty.
\end{eqnarray*}
\end{proof}

Intuitively, the more hackers out there, the more likely the blockchain will be hacker.  The next theorem confirms this intuition.

\begin{theorem}
\label{theorem3.4}
$P_{mk}(t)$,  $P_{mk}(\infty)$, and $E[T_{mk}]$ are all  increasing functions of  $k$, where $k$ is a positive integer. 
\end{theorem}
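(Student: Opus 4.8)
The plan is to treat $k$ exactly as Theorem~\ref{theorem3.1} treated the threshold $m$, routing all $k$-dependence through the single variable $Z_m=\min_{1\le j\le k}\bigl(\sum_{i=1}^m X_i^j\bigr)$. The first step is a pathwise coupling: realize the $(k{+}1)$-hacker and $k$-hacker systems on one probability space in which the first $k$ hackers use identical attack times $X_i^j$, and the same $Y$ and $W$. Writing $Z_m^{(k)}$ and $Z_m^{(k+1)}$ for the two success times, one has $Z_m^{(k+1)}=\min\bigl(Z_m^{(k)},\sum_{i=1}^m X_i^{k+1}\bigr)$ almost surely, so the family $\{Z_m^{(k)}\}_k$ is pathwise ordered and hence stochastically monotone in $k$. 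This ordering is the engine of the argument, playing the role that $Z_{m+1}>Z_m$ played in Theorem~\ref{theorem3.1}, and it is what I would propagate through each of the three quantities to obtain the asserted monotonicity in $k$.

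The second step handles $P_{mk}(\infty)$ and $E[T_{mk}]$ through the detection probability. From (\ref{eq:detectprob}), $p_{mk}=P(Z_m\le Y)=\int_0^\infty F_{Z_m}(s)\,dF_Y(s)$, and the coupling controls $F_{Z_m}$ as $k$ varies, so the behaviour of $p_{mk}$ in $k$ follows at once. For the limiting probability, the conclusion is then inherited from the absorbing-state analysis of the embedded Markov chain of (\ref{eq:transition}). For the mean functional time I would start from the decomposition used in Theorem~\ref{theorem3.1}, namely $E[T_m]=\bigl\{E[Y1_{\{Y\le Z_m\}}]+E[Z_m1_{\{Y>Z_m\}}]\bigr\}/p_{mk}+\tfrac{1-p_{mk}}{p_{mk}}E[W_1]$, and track how each factor responds as a hacker is added, now driving everything through $Z_m$ rather than through $m$. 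The same telescoping manipulation of $E[T_{m+1,k}]-E[T_{mk}]$ into sign-definite pieces as in Theorem~\ref{theorem3.1} should carry over, with the coupling supplying the almost-sure comparison of the indicator events.

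The third step is the instantaneous probability, worked from the renewal-type identity (\ref{eq:instprob}) piece by piece: I would compare each of the four terms $P\{t<Z_m\wedge Y\}$, $P\{Y\le t<Y+W_1,\,Y\le Z_m\}$, and their two convolution analogues under the coupling, obtaining a comparison of the integrands at every fixed $s$, and then conclude the monotonicity of $P_{mk}(t)$ and, by letting $t\to\infty$, of $P_{mk}(\infty)$.

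The main obstacle I anticipate is that the renewal function $G(s)=\sum_{\ell\ge1}P\bigl(\sum_{i=1}^\ell Y_i+W_i\le s\bigr)$ in (\ref{eq:instprob}) is itself $k$-dependent, because the inter-renewal law is that of $W$ added to $Y\mid Y<Z_m$, and this conditional law shifts with $k$. Thus the measure $dG$ against which the integrands are compared moves together with the integrands, so a clean termwise inequality does not transfer to $P_{mk}(t)$ for free — this is precisely the subtlety that the $m$-version in Theorem~\ref{theorem3.1} passes over. The crux of the proof will therefore be to control the joint effect of the $k$-dependence of the integrand and of $dG$, for instance by coupling the two renewal processes $\{N_2(t)\}$ directly or by showing that the relevant monotonicity survives integration against $dG$, so that the asserted monotonicity of $P_{mk}(t)$ in $k$ is preserved.
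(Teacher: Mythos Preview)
Your plan is essentially the paper's own argument: it too writes $Z_{mk_2}\ge Z_{mk_1}$ for $k_1>k_2$ on a common probability space, compares the four summands of (\ref{eq:instprob}) termwise, and for $E[T_{mk}]$ runs the same telescoping decomposition you describe, this time driven by $Z_{mk_2}\ge Z_{mk_1}$ in place of $Z_{m+1}\ge Z_m$.

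Two remarks. First, the obstacle you flag is real, and the paper does not address it: in its computation of $P_{mk_1}(t)-P_{mk_2}(t)$ it silently integrates both sets of integrands against a single $dG(s)$, exactly as its proof of Theorem~\ref{theorem3.1} did for the $m$-comparison. Since the inter-renewal law of $\{N_2(t)\}$ is that of $W+(Y\mid Y<Z_m)$, the renewal function $G$ genuinely moves with $k$ (and with $m$), so the termwise inequality of integrands does not by itself deliver the inequality for $P_{mk}(t)$; your instinct to couple the two renewal processes, or otherwise to control the simultaneous shift of integrand and measure, is the step the paper is missing, not a refinement you need to add on top of it.

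Second, note the sign: the paper's computations give $P_{mk_1}(t)-P_{mk_2}(t)\le 0$ and $E[T_{mk_1}]-E[T_{mk_2}]\le 0$ for $k_1>k_2$, i.e.\ these quantities are \emph{decreasing} in $k$, matching the stated intuition that more hackers make the blockchain less safe. The word ``increasing'' in the statement is a slip; you should state and prove the decreasing direction.
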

\begin{proof}
For $k_1>k_2$, we have
	\begin{eqnarray*}
		Z_{mk_2}=\min_{j=1,\dots,k_2}	\sum_{i=1}^m X_i^j \geq \min_{j=1,\dots,k_1}	\sum_{i=1}^m X_i^j=Z_{mk_1},
	\end{eqnarray*}
and hence
\begin{eqnarray*}
	Z_{mk_2}\wedge Y \geq Z_{mk_1} \wedge Y,
\end{eqnarray*}
Then (\ref{eq:instprob}) implies
\begin{eqnarray*}
	P_{mk_1}(t)-P_{mk_2}(t) &=& P\{t<Z_{mk_1}\wedge Y \}- P\{t<Z_{mk_2}\wedge Y \}\\
	&&+\int_0^t P\{Y \wedge Z_{mk_1}>t-s \}-\{Y \wedge Z_{mk_2}>t-s \} dG(s)\\
	&&+\int_0^tP\left\{Y\leq t-s<Y+W_1 ,Y\leq Z_{mk_1}\right\}\\
	&&-P\left\{Y\leq t-s<Y+W_1 ,Y\leq Z_{mk_2}\right\}dG(s)\\
	&\leq&0.
\end{eqnarray*}
This shows that $P_{mk}(t)$ is an increasing function of $k$.

Next, (\ref{eq:meantime}) implies
\begin{eqnarray*}
E[T_{mk}] 
&=& \frac{E[Y1_{\{Y\leq Z_{mk}\}}]}{p_{mk}}+\frac{E[Z_{mk}1_{\{Y>Z_{mk}\}}]}{p_{mk}}+\frac{1-p_{mk}}{p_{mk}}\times E[W_1].
\end{eqnarray*}
Thus, 
\begin{eqnarray*}
 E[T_{mk_1}]-E[T_{mk_2}] 
&=&  (p_{mk_2}p_{mk_1})^{-1} \bigg \{ p_{mk_2} E[Y1_{\{Y<Z_{mk_1} \}}]-p_{mk_1}E[Y1_{\{Y<Z_{mk_2} \}}] \\
& &+p_{mk_2} E[Z_{mk_1}1_{\{Y>Z_{mk_1}\}}]-p_{mk_1}E[Z_{mk_2}1_{\{Y>Z_{mk_2}\}}]+(p_{mk_2}-p_{mk_1})E[W_1] \bigg\}.
\end{eqnarray*}
The first four terms in the brackets can be written as 
\begin{eqnarray*}
& & p_{mk_2} E[Y1_{\{Y<Z_{mk_1}\}}]-p_{mk_1}E[Y1_{\{Y<Z_{mk_1} \}}]-p_{mk_1} E[Y1_{\{Z_{mk_1}<Y<Z_{mk_2}\}}] \\
& & + p_{mk_2} E[(Z_{mk_1}-Z_{mk_2})1_{\{Y>Z_{mk_1}\}}]+ p_{mk_2} E[Z_{mk_2}1_{\{Y>Z_{mk_1}\}}]\\
& &-p_{mk_1} E[Z_{mk_2}1_{\{Y>Z_{mk_1}\}}] 
 +p_{mk_1} E[Z_{mk_2}1_{\{Z_{mk_2}>Y>Z_{mk_1}\}}]\\
% \end{eqnarray*}
% \begin{eqnarray*}
&=& (p_{mk_2}-p_{mk_1})E[Y1_{\{Y<Z_{mk_1} \}}]+(p_{mk_2}-p_{mk_1})E[Z_{mk_2}1_{\{Y>Z_{mk_1}\}}]\\
& &  +p_{mk_1} E[Z_{mk_2}1_{\{Z_{mk_2}>Y>Z_{mk_1}\}}]\ -p_{mk_1} E[Y1_{\{Z_{mk_1}<Y<Z_{mk_2}\}}] \\
& & +p_{mk_2} E[(Z_{mk_1}-Z_{mk_2})1_{\{Y>Z_{mk_1}\}}] \\
&\leq&(p_{mk_2}-p_{mk_1})E[Y1_{\{Y<Z_{mk_1} \}}]+(p_{mk_2}-p_{mk_1})E[Z_{mk_2}1_{\{Y>Z_{mk_1}\}}]\\
& &  +p_{mk_1} E[Z_{mk_2}1_{\{Z_{mk_2}>Y>Z_{mk_1}\}}]\ -p_{mk_1} E[Z_{mk_1}1_{\{Z_{mk_1}<Y<Z_{mk_2}\}}] \\
& & +p_{mk_2} E[(Z_{mk_1}-Z_{mk_2})1_{\{Y>Z_{mk_1}\}}] \\
&=&(p_{mk_2}-p_{mk_1})E[Y1_{\{Y<Z_{mk_1} \}}]+(p_{mk_2}-p_{mk_1})E[Z_{mk_2}1_{\{Y>Z_{mk_1}\}}]\\
& &  +p_{mk_1} E[(Z_{mk_2}-Z_{mk_1})1_{\{Z_{mk_2}>Y>Z_{mk_1}\}}]\\
& & +p_{mk_2} E[(Z_{mk_1}-Z_{mk_2})1_{\{Y>Z_{mk_1}\}}] \\
&\leq&(p_{mk_2}-p_{mk_1})E[Y1_{\{Y<Z_{mk_1} \}}]+(p_{mk_2}-p_{mk_1})E[Z_{mk_2}1_{\{Y>Z_{mk_1}\}}]\\
& &  +p_{mk_1} E[(Z_{mk_2}-Z_{mk_1})1_{\{Y>Z_{mk_1}\}}]\\
& & +p_{mk_2} E[(Z_{mk_1}-Z_{mk_2})1_{\{Y>Z_{mk_1}\}}] \\
&=&(p_{mk_2}-p_{mk_1})E[Y1_{\{Y<Z_{mk_1} \}}]+(p_{mk_2}-p_{mk_1})E[Z_{mk_2}1_{\{Y>Z_{mk_1}\}}]\\
& &  +(p_{mk_2} -p_{mk_1})E[(Z_{mk_1}-Z_{mk_2})1_{\{Y>Z_{mk_1}\}}]\\
&=&(p_{mk_2}-p_{mk_1})E[Y1_{\{Y<Z_{mk_1} \}}]+(p_{mk_2}-p_{mk_1})E[Z_{mk_1}1_{\{Y>Z_{mk_1}\}}]\\
&\leq& 0,
\end{eqnarray*}
Besides, $(p_{mk_2}-p_{mk_1})E[W_1]\leq 0$. Therefore, $E[T_{mk}]$ is an increasing function of $k$.
\end{proof}

\subsection{Cost-benefit analysis}
Suppose $C_1(m)$ is the cost per unit of time for re-setting the blockchain, $C_2(m)$ is the cost per unit of time for running the blockchain, and $R(m)$ is the revenue earned per unit of time by running the blockchain. Then the total net revenue (i.e., the total profit), denoted as $TNR$, can be written as 
\begin{equation}
	TNR=\left\{\sum_{i=1}^{N_1} Y_i +\left(Z_m  \middle|\ Z_m<Y\right)\right\}(R(m)-C_2(m))-C_1(m)\sum_{i=1}^{N_1} W_i.
	\end{equation}
It follows from Wald's Identity that  the total expected net revenue is given by the following equation:
\begin{equation}
\label{eq:totalrevenue}
E_m[TNR]=\left\{E[N_1]E\left [Y\middle|\ Y<Z_m \right]+E \left[ Z_m \middle|\ Z_m<Y\right ]\right\}(R(m)-C_2(m))-E[N_1]E[W_1]C_1(m).
\end{equation}
Therefore, the expected net revenue per unit of time, denoted as $E[NR]$, can be written as 
\begin{equation}
\label{eq:netrevenue}
E_m[NR]=\frac{E[TNR]}{E[T_m]}.
\end{equation}

\section{Examples}

In this section, we provide three numerical examples. Derivations of all formulas are provided in the Appendix.
\begin{example}[Exponential hacking, exponential re-set, and exponential detecting times]
\label{example1}

Suppose $X_i^j\overset{iid}{\sim} \ex(\lambda_j)$ for $i=1,\dots,n$ and $j=1,\dots,k$, $Y\sim \ex(\delta)$, and $W\sim \ex(\eta)$ such that $X_1^j\dots X_n^j$, $Y$, and $W$ are mutually independent, where $\ex(\lambda_j)$, $\ex(\delta)$, and $\ex(\eta)$ denote the exponential distributions with rate $\lambda_j>0$, $\delta>0$, and $\eta>0$, respectively. We have $X\sim \ex(\lambda_j)$ if and only if its  probability density function is given by
\[f(x)=\lambda_j e^{-\lambda_j x }, \quad x>0.\] 
Clearly, $\sum_{i=1}^m X_i^j \sim \gam(m, \lambda_j)$, where $\gam(m, \lambda_j)$ denotes the gamma distribution with shape parameter $m$ and rate parameter $\lambda_j$. That is, $X\sim \gam(m, \lambda_j)$ if and only if its  probability density function and cumulative density function are given by 
\[f(x)=\frac{\lambda_j^m}{\Gamma(m)}x^{m-1}e^{-\lambda_j x}, \quad x>0,\]
and 
\[F(x)=\frac{1}{\Gamma(m)}\gamma(m, \lambda_jx),\]
where $\Gamma(m)=\int_0^\infty t^{m-1}e^{-t} dt$ is the Gamma function.  The cumulative distribution function of $Z_m$ is given by
\begin{eqnarray*}
	F_{Z_m}(z)&=&1-\prod_{j=1}^k \bigg[1-P\bigg(\sum_{i=1}^m X_i^j \leq z\bigg)\bigg ]
	=1-\prod_{j=1}^k\bigg [1-\frac{1}{\Gamma(m)}\gamma(m, \lambda_jz)\bigg ].
\end{eqnarray*}
We have
\begin{align*}
	E[T_m]
=&\frac{\int_0^\infty y\delta e^{-\delta y }\prod_{j=1}^k\bigg [1-\frac{1}{\Gamma(m)}\gamma(m, \lambda_jy)\bigg ]dy}{\int_0^\infty 1-\prod_{j=1}^k\bigg [1-\frac{1}{\Gamma(m)}\gamma(m, \lambda_js)\bigg ]\delta e^{-\delta s }ds
}\\
&+\frac{(1-\int_0^\infty 1-\prod_{j=1}^k\bigg [1-\frac{1}{\Gamma(m)}\gamma(m, \lambda_js)\bigg ]
\delta e^{-\delta y }ds)\int_0^\infty w\eta e^{-\eta w }dw}{\int_0^\infty 1-\prod_{j=1}^k\bigg [1-\frac{1}{\Gamma(m)}\gamma(m, \lambda_js)\bigg ]\delta e^{-\delta s }ds
}\\
&+\frac{\int_0^\infty s(-\prod_{j=1}^k\bigg [1-\frac{1}{\Gamma(m)}\gamma(m, \lambda_js)\bigg ])'e^{-\delta s}ds}{\int_0^\infty 1-\prod_{j=1}^k\bigg [1-\frac{1}{\Gamma(m)}\gamma(m, \lambda_js)\bigg ]\delta e^{-\delta s }ds
},
\end{align*}
and
\begin{align*}
	P_{mk}(t)=&\prod_{j=1}^k\bigg [1-\frac{1}{\Gamma(m)}\gamma(m, \lambda_jt)\bigg ]-(1-e^{-\delta t})\prod_{j=1}^k\bigg [1-\frac{1}{\Gamma(m)}\gamma(m, \lambda_jt)\bigg ]\\
	&+(\frac{P(Y \leq t)}{P(Y\leq Z_m)}1_{\{0\leq t \leq Z_m\}}+1_{\{t > Z_m\}}-\frac{P\{Y+W \leq t\}}{P\{Y\leq Z_m\}}1_{\{0\leq t \leq W+Z_m\}}-1_{\{t >W+Z_m\}})\\
		&\times \int_0^\infty (1-e^{-\delta y})(1-\prod_{j=1}^k\bigg [1-\frac{1}{\Gamma(m)}\gamma(m, \lambda_jy)\bigg ] )'dy\\
&+\int_0^\infty \prod_{j=1}^k\bigg [1-\frac{1}{\Gamma(m)}\gamma(m, \lambda_j(t-s))\bigg ]\\
&-(1-e^{-\delta (t-s)})\prod_{j=1}^k\bigg [1-\frac{1}{\Gamma(m)}\gamma(m, \lambda_j(t-s))\bigg ]dG(s)\\
	&+\int_0^t (\frac{P(Y \leq t-s)}{P(Y\leq Z_m)}1_{\{0\leq t-s \leq Z_m\}}+1_{\{t-s > Z_m\}}-\frac{P\{Y+W \leq t-s\}}{P\{Y\leq Z_m\}}1_{\{0\leq t-s \leq W+Z_m\}}\\
	&-1_{\{t-s >W+Z_m\}})\int_0^\infty (1-e^{-\delta y})(1-\prod_{j=1}^k\bigg [1-\frac{1}{\Gamma(m)}\gamma(m, \lambda_jy)\bigg ] )'dydG(s).
\end{align*}
We perform a simulation study to estimate $E[T_m]$ and $P_{m5}(t)$ as function of $m$ and $t$ for $k=5$.  For  $m=1,2,\dots,40$, we generate $T_{m}$ for the $N=30,000$ iterations. 
%To generate $T_{m}$, we firstly generate a hacking time, by keeping the minimum time of 5 summed up $m$ generated hacking time based on 5 $\lambda$s, and a detecting time. If the hacking time is greater than the detecting time, we save the detecting time and generate a re-setting time. After saving the generated re-setting time, we repeat the above steps until there is a hacking time less than a detecting time. We save the last hacking time and, finally, sum up all saved hacking, re-setting, detecting times to derive the $T_{m}$. Let $T_{mi}$ be the estimated time not being hacked for the $ith$ iteration. 
Then we estimate $E[T_m]$ as $\frac{1}{N}\sum_{i=1}^N T_{mi}$. Similarly, for $P_{m5}(t)$, we also performed $30, 000$ iterations to estimate each $P_{m5}(t)$ for $m=1,\dots,40$. 
%For each iteration, we generate $N$ $T_{m}$'s and compare each $T_{m}$ with $t$. If $t<T_{m}$, we are checking with whether $t$ is within the re-setting times. If no, the count plus $1$, otherwise the count plus $0$. If $t\geq T_{mi}$, the count plus $0$. Thus, $P_{m5}(t)$ is estimated by total counts divided by $N$. 
Figure~\ref{fig:exp} displays the values of $E[T_m]$ and $P_{m5}(3)$ under exponential hacking, exponential re-setting, and exponential detecting times for $m=1,2,\dots,40$.  It shows that $E[T_m]$ and $P_{m5}(3)$ are both increasing function of $m$.

\begin{figure}[!ht]
\begin{center}
\subfigure[$E(T_m)$ as a function of $m$]{\centering \includegraphics[scale=0.45]{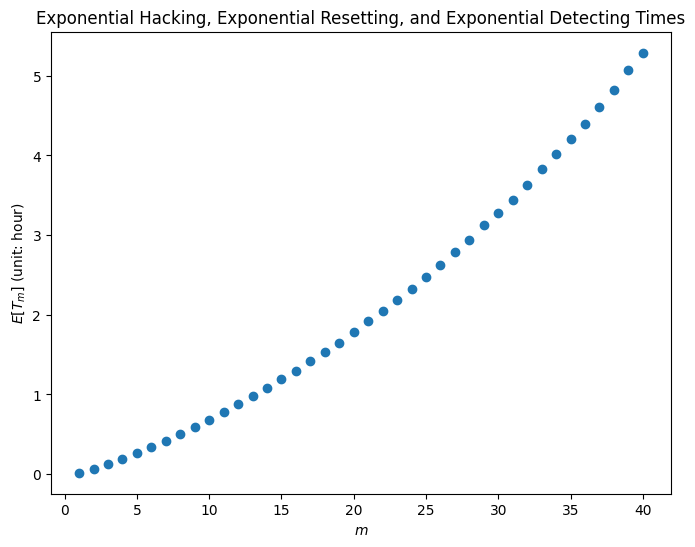}}
\subfigure[$P_{m5}(3)$ as a function of $m$]{\centering \includegraphics[scale=0.45]{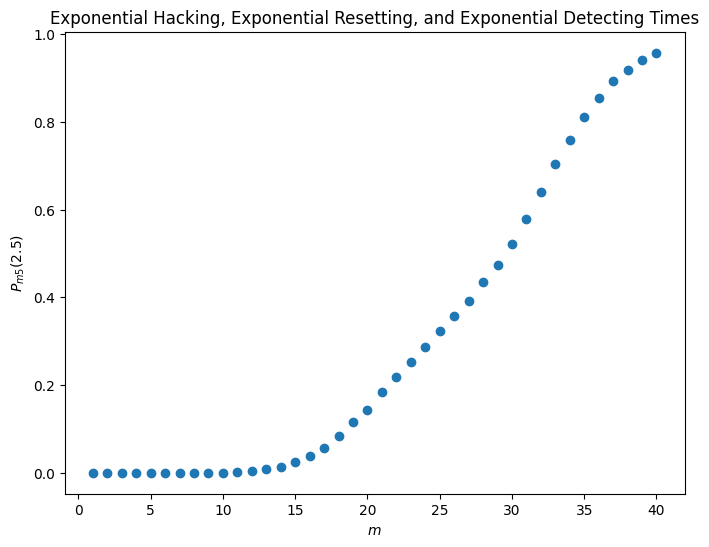}}
\end{center}
\caption{Figure 1: Plots of $E[T_m]$ and $P_{m5}(3)$  under exponential hacking, exponential re-set, and exponential detecting times.}
\label{fig:exp}
\end{figure}

Figure~\ref{fig:exp3D} provides the $3D$ version of $P_{m5}(t)$ and expected net revenue $E_m[NR]$, based on $N=30, 000$ iterations.  Panel $(a)$ of Figure~\ref{fig:exp3D}  plots $P_{m5}(t)$ as a function of $m$ and $t$. 
%For each $m$ in $ith$ iteration, we generate $T_{mi}$ compared with $t$, which is same with how we estimate $P_{m5}(2.5)$. This time we not only estimate $P_{m5}(t)$ when $t=3$ but also change the values of $t$ after finishing estimation of each $P_{m5}(t)$. As we can see, 
The blue dotted line in $P_{m5}(t)$ is a $3D$ version of $P_{m5}(3)$.  
Panel $(b)$ of Figure~\ref{fig:exp3D}  demonstrates the expected net revenues per unit of time for various values of $m$ when $R(m)=0.2m, C_1(m)=2m^{0.2},$ and $C_2(m)=2m^{0.3}$.  It is also based on $30,000$ iterations.
%To estimate $E_m[NR]$, we performed $N=30,000$ iterations for each $m$. 
The red vertical line shows that the expected net revenue achieves its maximum value of $1.87$ million dollars at $m=40$ for $1\leq m\leq 40$.

\begin{figure}[!ht]
\begin{center}
	 \subfigure[$P_{m5}(t)$ as a bivariate function of $m$ and $t$]{\centering \includegraphics[width=0.53\textwidth]{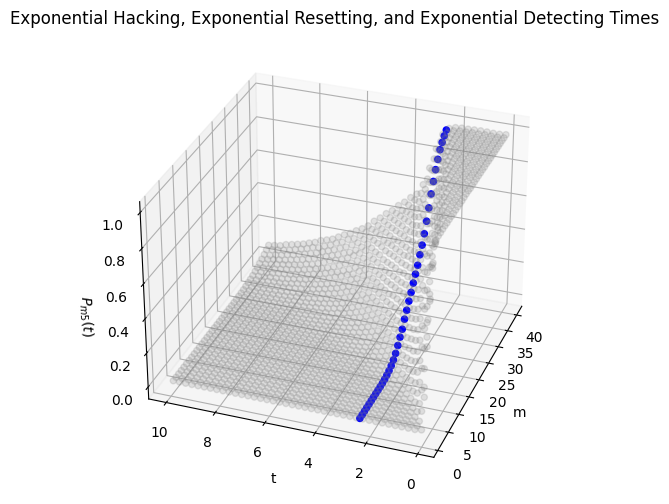}}
	 \subfigure[$E_m(NR)$ as a function of $m$]{\centering \includegraphics[width=0.46\textwidth]{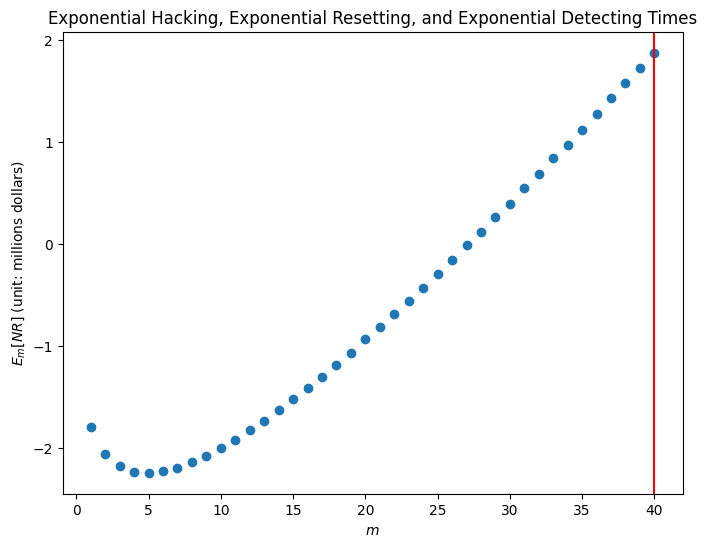}}
\end{center}
\caption{Figure~2: Plots of $P_{m5}(t)$ and  $E_m(NR)$ under exponential hacking, exponential re-set, and exponential detecting times.}
\label{fig:exp3D}
\end{figure}

\end{example}

\newpage
\begin{example}[Gamma hacking, Gamma re-set, and Gamma detecting times]
\label{example2}
Suppose $X_1^j, \ldots,$\\$X_n^j \overset{iid}{\sim} \gam(\eta_j, \delta_j)$,  $Y\sim \gam(\alpha, \beta)$, and $W\sim \gam(\theta, \tau)$ such that $X_1\dots X_n$, $Y$, and $W$ are mutually independent, where $\gam(\eta, \delta)$, $\gam(\alpha, \beta)$, and $\gam(\theta, \tau)$ denote the gamma distributions with shape parameters $\eta>0$, $\alpha>0$, and $\theta>0$ and rate parameters $\delta>0$, $\beta>0$, and $\tau>0$, respectively. Then $\sum_{i=1}^m X_i \sim \gam(m\eta_j, \delta_j)$. %That is, $X\sim \gam(m\eta_j, \delta_j)$ if and only if its  probability density function and cumulative density function are given by 
%\[f(x)=\frac{\delta_j^{m\eta_j}}{\Gamma(m\eta_j)}x^{m\eta_j-1}e^{-\delta_j x}, \quad x>0,\]
%and 
%\[F(x)=\frac{1}{\Gamma(m\eta_j)}\gamma(m\eta_j, \delta_jx).\]
It follows that
\begin{eqnarray*}
	E[T_m]	&=&\frac{\int_0^\infty y\frac{\beta^\alpha}{\Gamma(\alpha)}y^{\alpha-1}e^{-\beta y} \prod_{j=1}^k \bigg[1-\frac{1}{\Gamma(m\eta_j)}\gamma(m\eta_j, \delta_jz)\bigg ]dy}{\int_0^{\infty}\bigg(1-\prod_{j=1}^k \bigg[1-\frac{1}{\Gamma(m\eta_j)}\gamma(m\eta_j, \delta_js)\bigg ]\bigg)\frac{\beta^\alpha}{\Gamma(\alpha)}s^{\alpha-1}e^{-\beta s}ds}\\
	&&+\frac{\int_0^{\infty}\int_0^{\infty}\bigg(\prod_{j=1}^k \bigg[1-\frac{1}{\Gamma(m\eta_j)}\gamma(m\eta_j, \delta_js)\bigg ]\bigg)\frac{\beta^\alpha}{\Gamma(\alpha)}s^{\alpha-1}e^{-\beta s} w\frac{\tau^\theta}{\Gamma(\theta)}w^{\theta-1}e^{-\tau w}dsdw}{\int_0^{\infty}\bigg(1-\prod_{j=1}^k \bigg[1-\frac{1}{\Gamma(m\eta_j)}\gamma(m\eta_j, \delta_js)\bigg ]\bigg)\frac{\beta^\alpha}{\Gamma(\alpha)}s^{\alpha-1}e^{-\beta s}ds}\\
	&&+\frac{\int_0^\infty (-\prod_{j=1}^k \bigg[1-\frac{1}{\Gamma(m\eta_j)}\gamma(m\eta_j, \delta_js)\bigg ])'(s-\frac{s}{\Gamma(\alpha)}\gamma(\alpha, \beta s))ds}{\int_0^{\infty}\bigg(1-\prod_{j=1}^k \bigg[1-\frac{1}{\Gamma(m\eta_j)}\gamma(m\eta_j, \delta_js)\bigg ]\bigg)\frac{\beta^\alpha}{\Gamma(\alpha)}s^{\alpha-1}e^{-\beta s}ds},
\end{eqnarray*}
and
\begin{eqnarray*}
	P_{mk}(t)&=&\prod_{j=1}^k \bigg[1-\frac{1}{\Gamma(m\eta_j)}\gamma(m\eta_j, \delta_jt)\bigg ]-\frac{1}{\Gamma(\alpha)}\gamma(\alpha, \beta t)\\
	&&+\bigg(1-\prod_{j=1}^k \bigg[1-\frac{1}{\Gamma(m\eta_j)}\gamma(m\eta_j, \delta_jt)\bigg ]\bigg)\frac{1}{\Gamma(\alpha)}\gamma(\alpha, \beta t)\\
	&&+(\frac{\frac{1}{\Gamma(\alpha)}\gamma(\alpha, \beta t)}{P(Y\leq Z_m)}1_{\{0\leq t \leq Z_m\}}+1_{\{t > Z_m\}}-\frac{P\{Y+W \leq t\}}{P\{Y\leq Z_m\}}1_{\{0\leq t \leq W+Z_m\}}-1_{\{t >W+Z_m\}})\\
	&&\times \int_0^{\infty}\frac{1}{\Gamma(\alpha)}\gamma(\alpha, \beta y)(-\prod_{j=1}^k \bigg[1-\frac{1}{\Gamma(m\eta_j)}\gamma(m\eta_j, \delta_jz)\bigg ])'dy\\
	&&+\int_0^t \prod_{j=1}^k \bigg[1-\frac{1}{\Gamma(m\eta_j)}\gamma(m\eta_j, \delta_j(t-s))\bigg ]-\frac{1}{\Gamma(\alpha)}\gamma(\alpha, \beta (t-s))\\
				\end{eqnarray*}
	\begin{eqnarray*}
	&&+(1-\prod_{j=1}^k \bigg[1-\frac{1}{\Gamma(m\eta_j)}\gamma(m\eta_j, \delta_j(t-s))\bigg ])\frac{1}{\Gamma(\alpha)}\gamma(\alpha, \beta (t-s))dG(s)\\
	&&+\int_0^t(\frac{\frac{1}{\Gamma(\alpha)}\gamma(\alpha, \beta (t-s))}{P(Y\leq Z_m)}1_{\{0\leq t-s \leq Z_m\}}+1_{\{t-s > Z_m\}}-\frac{P\{Y+W \leq t-s\}}{P\{Y\leq Z_m\}}1_{\{0\leq t-s \leq W+Z_m\}}\\
	&&-1_{\{t-s >W+Z_m\}})\int_0^{\infty} \frac{1}{\Gamma(\alpha)}\gamma(\alpha, \beta y)(-\prod_{j=1}^k \bigg[1-\frac{1}{\Gamma(m\eta_j)}\gamma(m\eta_j, \delta_jz)\bigg ])'dydG(s).
\end{eqnarray*}
As in Example~1, we performed $N=30,000$ iterations to estimate $E[T_m]$ for $m=1,2,\dots,40$.  
%We generate $T_{mi}$, $T_{m}$ for the $ith$ iteration, by generating a hacking time and a detecting time and check which one of the times is greater. We keep the detecting time in each cycle until the hacking time is less than the detecting time. Every time we keep a detecting time, there is a re-setting time generated and kept. In addition, if there is a hacking time less than a detecting time, we keep the hacking time and stop the $ith$ iteration. We sum up all kept hacking, detecting, and re-setting times to have  $T_{mi}$. $E[T_m]$ is estimated by $\frac{1}{N}\sum_{i=1}^N T_{mi}$. Based on $T_{mi}$, we also can derive $P_{mk}(t)$ by comparing $T_{mi}$ and $t$. If $t<T_{mi}$ and $t$ is not in a re-setting time, the count plus $1$, otherwise the count plus $0$. 
Figure~\ref{fig:gamma} displays plots of $E[T_m]$ and $P_{m4}(2)$ for various values of $m$. Clearly, $P_{m4}(2)$  and $E[T_m]$ are increasing functions of $m$. As $m$ increases, $P_{m4}(2)$ approaches $1$.

\begin{figure}[!ht]
\begin{center}
\subfigure[$E(T_m)$ as a function of $m$]{\centering \includegraphics[scale=0.45]{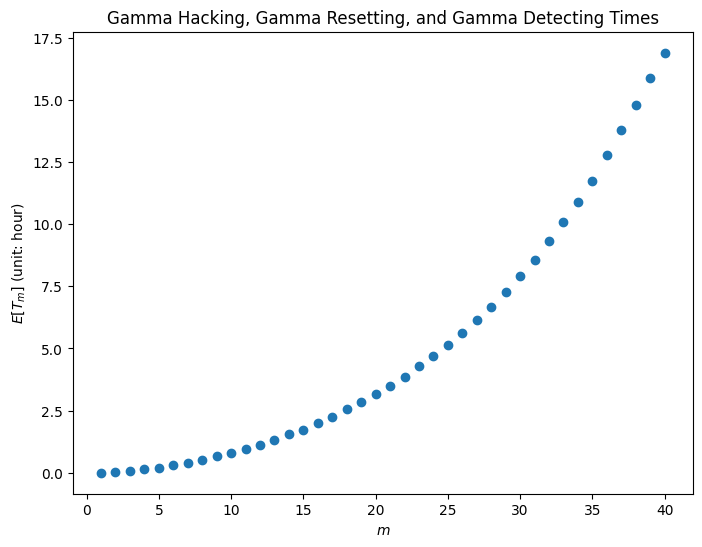}}
\subfigure[$P_{m4}(2)$ as a function of $m$]{\centering \includegraphics[scale=0.45]{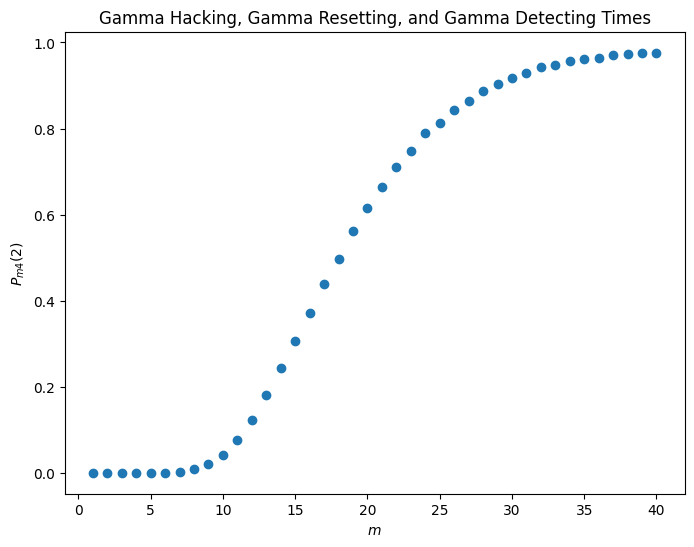}}
\end{center}
\caption{Figure 3: Plots of $E[T_m]$ and $P_{m4}(2)$  under gamma hacking, gamma re-set, and gamma detecting times.}
\label{fig:gamma}
\end{figure} 

\begin{figure}[!ht]
\begin{center}
	 \subfigure[$P_{m4}(t)$ as a bivariate function of $m$ and $t$]{\centering \includegraphics[width=0.53\textwidth]{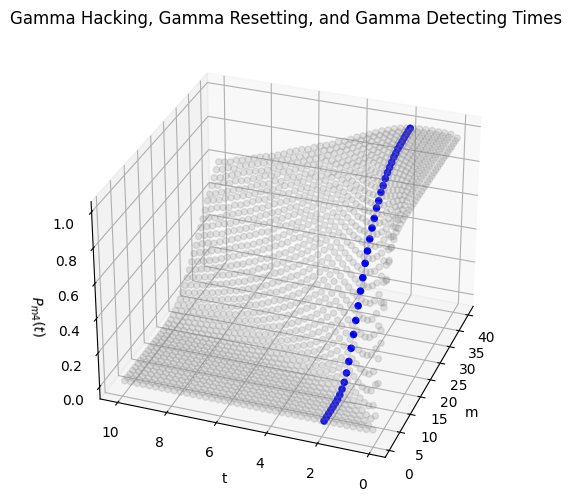}}
	 \subfigure[$E_m(NR)$ as a function of $m$]{\centering \includegraphics[width=0.46\textwidth]{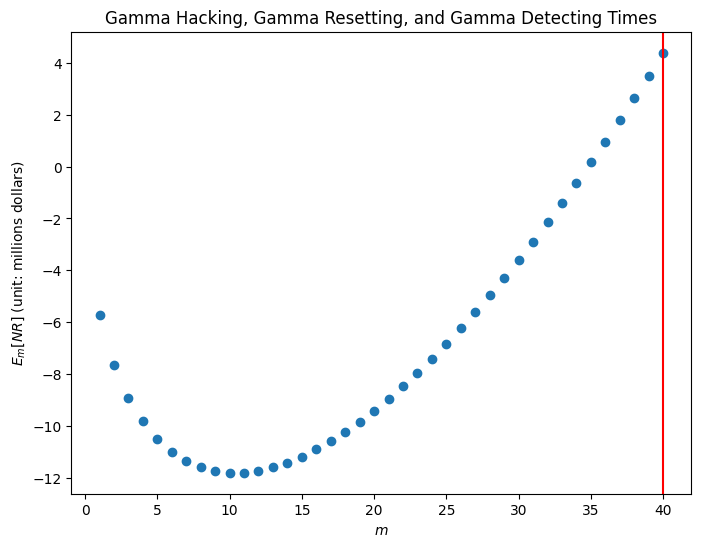}}
\end{center}
\caption{Figure 4: Plots of $P_{m4}(t)$ and  $E_m(NR)$  under gamma hacking, gamma re-set, and gamma detecting times.}
\label{fig:gamma3D}
\end{figure}

Panel (a) of Figure~\ref{fig:gamma3D} provides a 3D plot of $P_{m4}(t)$ as a function of $m$ and $t$ for $k=4$ based on $30, 000$ iteration. The blue dotted line is the 3D version of $P_{m4}(2)$. 
%We performed $N=30000$ iterations to estimate $P_{m4}(t)$. For each iteration, we generate $T_{mi}$ and re-setting times compared with $t$. By the same method in $P_{m4}(2)$, we estimated $P_{m4}(t)$ for various values of $t$ and $m$.  P
Panel (b) of Figure~\ref{fig:gamma3D} demonstrates $E_m[NR]$ for $R(m)=0.3m+4, C_1(m)=2.5m^2,$ and $C_2(m)=2m^{0.3}$, based on $50,000$ iterations. It shows that $E_m[NR]$ achieves the maximum value of $2.08$ million dollars when $m=1$.
\end{example}

\begin{example}[Gamma hacking, Gamma re-set, and Weibull detecting times]
\label{example3}
The setup here is the same as the one in Example 2 except that here we have $Y\sim\mathsf{Weibull}(\alpha, \beta)$. That is,  the probability density function of $Y$ is given by
\begin{align*}
	f(y)=\frac{\beta}{\alpha}\left(\frac{y}{\alpha}\right)^{\beta-1}e^{-\left(y/\alpha\right)^\beta},\quad y>0.
\end{align*}
In this case, we have
\begin{eqnarray*}
	E[T_m]	&=&\frac{\int_0^\infty y\frac{\beta}{\alpha}\left(\frac{y}{\alpha}\right)^{\beta-1}e^{-\left(y/\alpha\right)^\beta} \prod_{j=1}^k \bigg[1-\frac{1}{\Gamma(m\eta_j)}\gamma(m\eta_j, \delta_jy)\bigg ]dy}{\int_0^{\infty}(1-\prod_{j=1}^k \bigg[1-\frac{1}{\Gamma(m\eta_j)}\gamma(m\eta_j, \delta_js)\bigg ])\frac{\beta}{\alpha}\left(\frac{y}{\alpha}\right)^{\beta-1}e^{-\left(s/\alpha\right)^\beta}ds}\\
	&&+\frac{(1-\int_0^{\infty}(1-\prod_{j=1}^k \bigg[1-\frac{1}{\Gamma(m\eta_j)}\gamma(m\eta_j, \delta_js)\bigg ])f_Y(s)ds)\int_0^\infty w\frac{\tau^\theta}{\Gamma(\theta)}w^{\theta-1}e^{-\tau w}dw}{\int_0^{\infty}(1-\prod_{j=1}^k \bigg[1-\frac{1}{\Gamma(m\eta_j)}\gamma(m\eta_j, \delta_js)\bigg ])\frac{\beta}{\alpha}\left(\frac{y}{\alpha}\right)^{\beta-1}e^{-\left(s/\alpha\right)^\beta}ds}\\
	&&+\frac{\int_0^\infty (-\prod_{j=1}^k \bigg[1-\frac{1}{\Gamma(m\eta_j)}\gamma(m\eta_j, \delta_js)\bigg ])'e^{-(s/\alpha)^\beta}ds}{\int_0^{\infty}(1-\prod_{j=1}^k \bigg[1-\frac{1}{\Gamma(m\eta_j)}\gamma(m\eta_j, \delta_js)\bigg ])\frac{\beta}{\alpha}\left(\frac{y}{\alpha}\right)^{\beta-1}e^{-\left(s/\alpha\right)^\beta}ds}\\
\end{eqnarray*}
and
\begin{eqnarray*}
		P_{mk}(t)	&=&\prod_{j=1}^k \bigg[1-\frac{1}{\Gamma(m\eta_j)}\gamma(m\eta_j, \delta_jt)\bigg ]-1+e^{-(t/\alpha)^\beta}\\
		&&+(1-\prod_{j=1}^k \bigg[1-\frac{1}{\Gamma(m\eta_j)}\gamma(m\eta_j, \delta_jt)\bigg ])(1-e^{-(t/\alpha)^\beta})\\
	&&+(\frac{1-e^{-(t/\alpha)^\beta}}{P(Y\leq Z_m)}1_{\{0\leq t \leq Z_m\}}+1_{\{t > Z_m\}}-\frac{P\{Y+W \leq t\}}{P\{Y\leq Z_m\}}1_{\{0\leq t \leq W+Z_m\}}-1_{\{t >W+Z_m\}})\\
	&&\times\int_0^{\infty}(1-e^{-(y/\alpha)^\beta})(-\prod_{j=1}^k \bigg[1-\frac{1}{\Gamma(m\eta_j)}\gamma(m\eta_j, \delta_jz)\bigg ])'dy\\
	&&+\int_0^t \prod_{j=1}^k \bigg[1-\frac{1}{\Gamma(m\eta_j)}\gamma(m\eta_j, \delta_j(t-s))\bigg ]-1+e^{-((t-s)/\alpha)^\beta}\\
	\end{eqnarray*}
	\begin{eqnarray*}
	&&+(1-\prod_{j=1}^k \bigg[1-\frac{1}{\Gamma(m\eta_j)}\gamma(m\eta_j, \delta_j(t-s))\bigg ])(1-e^{-((t-s)/\alpha)^\beta})dG(s)\\
	&&+\int_0^t(\frac{1-e^{-((t-s)/\alpha)^\beta}}{P(Y\leq Z_m)}1_{\{0\leq t-s \leq Z_m\}}+1_{\{t-s > Z_m\}}\\
	&&-\frac{P\{Y+W \leq t-s\}}{P\{Y\leq Z_m\}}1_{\{0\leq t-s \leq W+Z_m\}}-1_{\{t-s >W+Z_m\}})\\
	&&\times \int_0^{\infty}(1-e^{-(y/\alpha)^\beta})(-\prod_{j=1}^k \bigg[1-\frac{1}{\Gamma(m\eta_j)}\gamma(m\eta_j, \delta_jz)\bigg ])'dydG(s).\\
\end{eqnarray*}
%%%%%%
%%%%%%% next step %%%%%%%%%
Here we performed $N=20,000$ iterations to estimate $E[T_m]$ for $m=1,2,\dots,40$. 
%We firstly generate $T_{mi}$ for the $ith$ iteration by generating a hacking time and a detecting time and check which one is greater. If the detecting time is less than the sum of hacking times, we keep the detecting time; otherwise, we keep the sum of hacking times, at the same time, the iterations stop. Every time we keep a detecting time, there is a re-setting time generated and kept. Finally, we sum up all kept hacking, detecting, and re-setting times to have  $T_{mi}$. It is clear that $E[T_m]$ could be estimated by $\frac{1}{N}\sum_{i=1}^N T_{mi}$ and $P_{m7}(t)$ could be derived by comparing $T_{mi}$ and $t$. If the two conditions,$t<T_{mi}$ and $t$ is not in a re-setting time, the count plus $1$; otherwise the count plus $0$. 
Figure~\ref{fig:weibull} displays plots of $E[T_m]$ and $P_{m7}(7)$ as $m$ increases.  It shows that $P_{m7}(7)$ and $E[T_m]$ are increasing functions of $m$.

%%%%%
\begin{figure}[!th]
\begin{center}
\subfigure[$E(T_m)$ as a function of $m$]{\centering \includegraphics[scale=0.45]{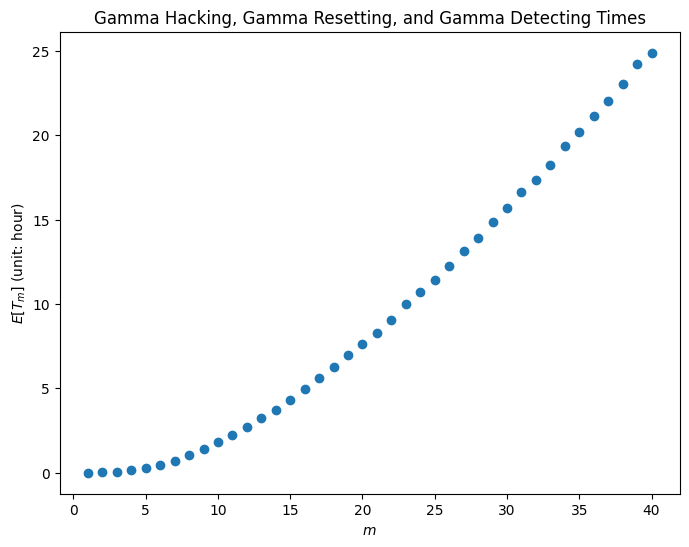}}
\subfigure[$P_{m7}(7)$ as a function of $m$]{\centering \includegraphics[scale=0.45]{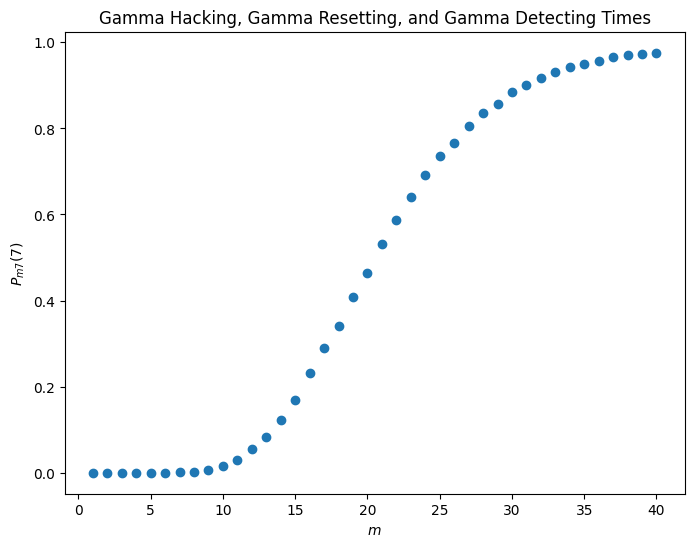}}
\end{center}
\caption{Figure 5: Plots of $E[T_m]$ and $P_{m7}(7)$  under gamma hacking, gamma re-set, and Weibull detecting times.}
\label{fig:weibull}
\end{figure}
%%%%%
%%%%%%
\begin{figure}[!ht]
\begin{center}
	 \subfigure[$P_{m7}(t)$ as a bivariate function of $m$ and $t$]{\centering \includegraphics[width=0.53\textwidth]{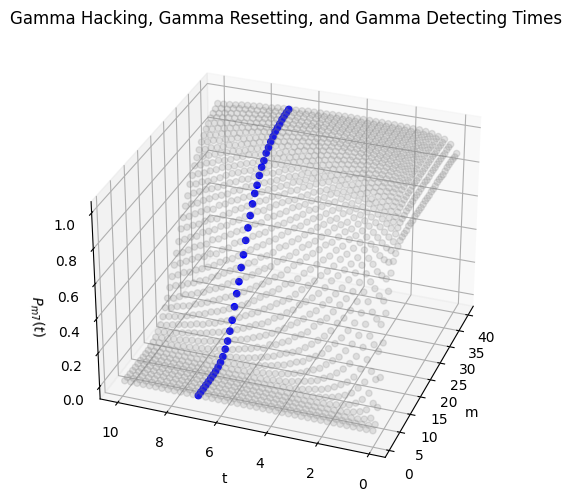}}
	 \subfigure[$E_m(NR)$ as a function of $m$]{\centering \includegraphics[width=0.46\textwidth]{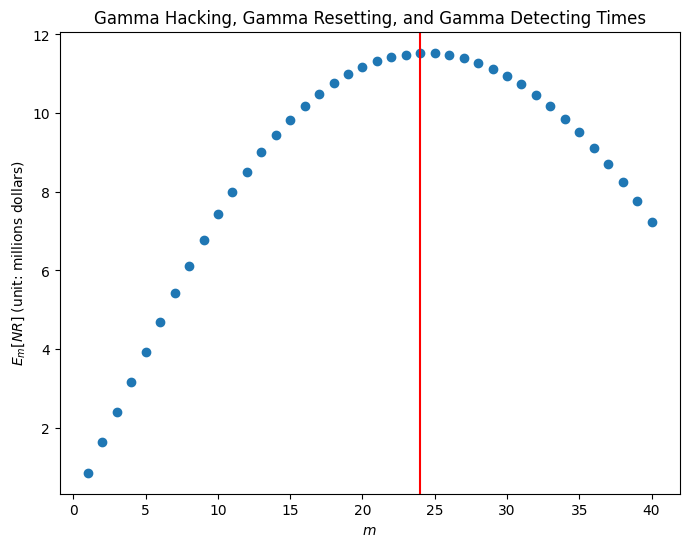}}
\end{center}
\caption{Figure 6: Plots of $P_{m7}(t)$ and $E_m(NR)$  under gamma hacking, gamma re-set, and Weibull detecting times.}
\label{fig:weibull3D}
\end{figure}

Figure~\ref{fig:weibull3D} demonstrates the $3D$ version of $P_{m7}(t)$ and $E_m[NR]$. Panel $(a)$ of Figure~\ref{fig:weibull3D}  provides a plot of $P_{m7}(t)$ as a function of $m$ and $t$, based on $N=20,000$ iterations. 
%For each $m$ in $ith$ iteration, $T_{mi}$ is generated by comparing with $t$, which is same with how we estimate $P_{m7}(7)$. We generate $P_{m7}(t)$ for each $m$ with $t$ increasing. 
The blue dotted line is the $3D$ version of $P_{m7}(4)$.  Panel $(b)$ of Figure~\ref{fig:weibull3D}  demonstrates $E_m[NR]$ for different values of $m$ when $R(m)=1.3m^{1.2}, C_1(m)=0.5m^2,$ and $C_2(m)=0.4m^{1.5}$, based on $N=20,000$ iterations. The red vertical line shows that the expected net revenue achieved its maximum value $7.47$ millions dollars at $m=24$.
%%%%%%
\end{example}

\newpage
\section{Concluding remarks}

This article studied the stochastic behavior of an $n$-node blockchain under cyber attacks from multiple hackers when re-setting the blockchain takes a random amount of time.  Under the assumption that the detecting times, hacking times, and re-setting times follow arbitrary distributions,  we have derived the instantaneous functional probability,  limiting functional probability,  and mean functional time.  Moreover, we have shown that these three quantities are all increasing functions of the number of nodes,  providing theoretical backup for the intuition that the more nodes a blockchain has, the safer it is.

\section*{Reference}
\begin{description}
\item{} Choi, T.,~ Guo, S.,~Liu, N.,~and Shi, X.~(2020).  Optimal pricing in on-demand-service-platform-operations with hired agents and risk-sensitive customers in the blockchain era. \emph{European Journal of Operational Research}~284(3), 1031--1042.
\item{} Casino, F., Dasaklis, T.~K.~and Patsakis, C.~(2019). A systematic literature review of blockchain-based applications: current status, classification and open issues. \emph{Telematics and Informatics}~36, 55--81.  
\item{} Corbet, S., Cumming, D.~, Lucey, B.~, Peat, M.~, Vigne, S.~(2020). The destabilising effects of cryptocurrency cybercriminality. \emph{Economics Letters}~volume 191.  
\item{} Diffie, W.~and Hellman, M.E.~(1976).  New directions in cryptography. \emph{IEEE Transactions on Information Theory}~22(5), 644--654. 
\item{} Goldwasser, S.~and Micali, S.~(1982). Probabilistic encryption. \emph{Journal of Computer System and Sciences}~28(2), 270--299. 
\item{} Goldwasser, S., Micali, S.~and Rivest, R.L.~(1988). A digital signature scheme secure against adaptive chosen-message attacks. \emph{SIAM Journal on Computing}~18(1), 283--308. 
\item{} Hussain, S., Sivakumar, T.B., and Khang, A.~(2022). Cryptocurrency Methodologies and Techniques. \emph{The Data-Driven Blockchain Ecosystem}~page 9. 
\item{} Khan, M.A.~and Salah, K.~(2017). IoT security: review, blockchain solutions, and open challenges.  \emph{Future Generation Computer Systems.}~82, 395--411.
\item{} Khanum, S.~and Mustafa, K.~(2022). A systematic literature review on sensitive data protection in blockchain applications. \emph{Concurrency and Computation Practice and Experience.}~35(1). 
\item{} Kumar, A.,~ Abhishek, K.,~  Ghalib, M.R.,~  Nerurkar, P.,~ Bhirud, S.,~ Alnumay, W.,~ Kumar, S.A.,~ Chatterjee, P.~and Ghosh, U.~(2020). Securing logistics system and supply chain using Blockchain. \emph{Applied stochastic models in business and industry}~37(3), 413--428. 
\item{} Li, C.~and Zhang, L.J.~(2017). A blockchain based new secure multi-layer network model for internet of things.  \emph{Proceedings--2017 IEEE 2nd Intertional Congress on Internet of Things}, 33--41. 
\item{} Melo, C., Dantas, J.~, Pereira, P.~and Maciel, P.~(2021). Distributed application provisioning over Ethereum-based private and permissioned blockchain: availability modeling, capacity, and costs planning.  \emph{The Journal of Supercomputing}~77, 9615--9641.  
\item{} Meng, W., Tischhauser, E.~, Wang, Q.~, Wang, Y.~and Han, J.~(2017). When instruction detection meets blockchain technology: a review. \emph{IEEE Access}~6, 10179--10188.
\item{} Xu, X.,~and Hong, L.~(2024). Instantaneous and limiting behavior of an n-node blockchain under cyber attacks from a single hacker. Business Analytics in Practice: Enhancing Decision Making (Edited by Emrouznejad et al.) \emph{Lecture Notes in Operations Research}, Springer, New York, to appear.
\item{} Rivest, R.L.~, Shamir, A.~and Adelman, L.~(1978). A method for obtaining digital signatures and public-key cryptosystems.  \emph{Communications of the ACM}~21(2), 120--126.
\item{} Tama, B.A., Kweka, B.J.,~ Park, Y.~and Rhee, K.H.~(2017). A critical review of blockchain and its current applications.  \emph{2017 International Conference on Electrical Engineering and Computer Science IEEE}, 109--113.
\item{} Tsuchiya, Y.~and Hiramoto, N.~(2021). How cryptocurrency is laundered: Case study of Coincheck hacking incident.  \emph{Forensic Science International: Reports}~volume 4.
\item{} Zhang, Z.,~Zargham.,~and M., V.~(2020). On modeling blockchain-enabled economic networks as stochastic dynamical systems.  \emph{Applied Network Science}~5, 19.
\item{} Zheng, Z.,~Xie, S.~, Dai, H.N., and Wang, H.~(2018).  Blockchain challenges and opportunities: a survey.   \emph{International Journal of Web and Grid Services}~14(4), 352--375.
\end{description}

\section*{Appendix}

\subsection*{Some general formulas}

Since
%\begin{eqnarray*}
$Z_m=\min_{j=1,\dots,k}\bigg(\sum_{i=1}^m X_i^j\bigg)$,	
%\end{eqnarray*}
we have
\begin{eqnarray*}
	F_{Z_{m}}(z) &=&P\bigg(\min_{j=1,\dots,k}\bigg(\sum_{i=1}^m X_i^j\bigg)\leq z\bigg)\\
	&=&1-P\bigg(\min_{j=1,\dots,k}\bigg(\sum_{i=1}^m X_i^j\bigg)> z\bigg)\\
	&=&1-P\bigg(\sum_{i=1}^m X_i^1 > z, \sum_{i=1}^m X_i^2 > z,\dots, \sum_{i=1}^m X_i^k > z\bigg)\\
%	\end{eqnarray*}
%	\begin{eqnarray*}
	&\stackrel{ind}{=}&1-\prod_{j=1}^k P\bigg(\sum_{i=1}^m X_i^j >z\bigg)\\
	&=&1-\prod_{j=1}^k \bigg[1-P\bigg(\sum_{i=1}^m X_i^j \leq z\bigg)\bigg ].
\end{eqnarray*}
Equation~(5) in Hong and Sarkar (2013) implies
\begin{eqnarray*}
E[Y_1]&=& E\bigg[Y\bigg|Y<Z_m\bigg]=\int_0^\infty yf_{Y|Y<Z_m}(y)dy\\
&=&\frac{\int_0^\infty \int_y^\infty yg(y,s)dsdy}{P(Y<Z_m)}
=\frac{\int_0^\infty \int_y^\infty yf_Y(y)f_{Z_m}(s)dsdy}{1-\int_0^{\infty}F_{Z_m}(s)f_Y(s)ds},
\end{eqnarray*}
where the joint probabiliyt density function of $Y$ and $Z_m$, by independence of $Y$ and $Z_m$, is given by
\begin{align*}
g(y,s)=f_Y(y)f_{Z_m}(s),\quad y>0, s>0. 
\end{align*}
It is clear that
(\ref{eq:detectprob}) implies that
\begin{equation*}
\label{eq:probability}
 p_{mk}=P\bigg\{Y>Z_m\bigg\}=\int_0^{\infty}F_{Z_m}(s)f_Y(s)ds.   
\end{equation*} 
Hence, 
\begin{eqnarray*}
	E[N_1][Y_1]&=&\frac{1-p_{mk}}{p_{mk}}E\bigg[Y\bigg|Y<Z_m\bigg]\\
	&=&\frac{\int_0^\infty \int_y^\infty yf_Y(y)f_{Z_m}(s)dsdy}{\int_0^{\infty}F_{Z_m}(s)f_Y(s)ds}\\
	&=&\frac{\int_0^\infty yf_Y(y) \int_y^\infty f_{Z_m}(s)dsdy}{\int_0^{\infty}F_{Z_m}(s)f_Y(s)ds}\\
	&=&\frac{\int_0^\infty yf_Y(y) (1-F_{Z_m}(y) )dy}{\int_0^{\infty}F_{Z_m}(s)f_Y(s)ds}.
\end{eqnarray*}
Similarly,
\[E[N_1][W_1]=\frac{1-p_{mk}}{p_{mk}}E[W]=\frac{ (1-\int_0^{\infty}F_{Z_m}(s)f_Y(s)ds)\int_0^\infty wf_W(w)dw}{\int_0^{\infty}F_{Z_m}(s)f_Y(s)ds},
\]
and
\begin{align*}
	E\bigg[Z_m\bigg|Z_m<Y\bigg]&=\frac{\int_0^\infty \int_s^\infty sf_Y(y)f_{Z_m}(s)dyds}{\int_0^{\infty}F_{Z_m}(s)f_Y(s)ds}\\
	&=\frac{\int_0^\infty sf_{Z_m}(s) \int_s^\infty f_Y(y)dyds}{\int_0^{\infty}F_{Z_m}(s)f_Y(s)ds}\\
	&=\frac{\int_0^\infty sf_{Z_m}(s) (1-F_Y(s))ds}{\int_0^{\infty}F_{Z_m}(s)f_Y(s)ds}.
\end{align*}
Therefore,
\begin{equation} \label{eq:Tm}
\begin{split}
E[T_m]=&\frac{\int_0^\infty yf_Y(y) (1-F_{Z_m}(y) )dy+(1-\int_0^{\infty}F_{Z_m}(s)f_Y(s)ds)\int_0^\infty wf_W(w)dw}{\int_0^{\infty}F_{Z_m}(s)f_Y(s)ds}\\
	&+\frac{\int_0^\infty sf_{Z_m}(s) (1-F_Y(s))ds}{\int_0^{\infty}F_{Z_m}(s)f_Y(s)ds}.
\end{split}
\end{equation}
Next, we shift our attention to the terms in (\ref{eq:instprob}).  We have
\begin{align*}
P\bigg\{t<Y \wedge Z_m\bigg\}&=1-P\bigg\{Z_m\leq t\bigg\}-P\{Y\leq t\}+P\bigg\{Z_m \leq t,Y\leq t\bigg\}.
\end{align*}
Similarly,
\begin{align*}
	P\bigg\{Y \wedge Z_m>t-s\bigg\}=&1-P\bigg\{Z_m\leq t-s\bigg\}-P\{Y\leq t-s\}\\
	&+P\bigg\{Z_m \leq t-s,Y\leq t-s\bigg\}.
\end{align*}
Also, 
\begin{eqnarray*}
	&&P\left\{Y\leq t<Y+W_1 ,Y\leq Z_m \right\}\\&=&P\left\{Y\leq t<Y+W_1\mid Y\leq Z_m \right\}P\left\{Y\leq Z_m\right\}\\
	&=&P\left\{Y_1\leq t<Y_1+W_1\right\}P\left\{Y\leq Z_m\right\}\\
	&=&\left[1-P\{Y_1>t \text{ or }Y_1+W_1\leq t\}\right]P\left\{Y\leq Z_m\right\}\\
	&=&[1-P\{Y_1>t\}-P\{Y_1+W_1\leq t\}+P\{Y_1>t,Y_1+W_1\leq t\}]P\left\{Y\leq Z_m\right\}\\
	&=&[1-P\{Y_1>t\}-P\{Y_1+W_1\leq t\}]P\left\{Y\leq Z_m\right\}\\
	&=&[P\{Y_1\leq t\}-P\{Y_1+W_1\leq t\}]P\left\{Y\leq Z_m\right\},
\end{eqnarray*}
where $Y_1\overset{d}{=}Y\mid Y\leq Z_m$. Similarly,
\begin{eqnarray*}
	&&P\left\{Y\leq t-s<Y+W_1, Y\leq Z_m\right\}\\
	&=&[P\{Y_1\leq t-s\}-P\{Y_1+W_1\leq t-s\}]P\left\{Y\leq Z_m\right\}.
\end{eqnarray*}
Therefore,
\begin{equation} \label{eq:Pmt}
\begin{split}
	P_{mk}(t)=&1-F_{Z_m}(t)-F_Y(t)+F_{Z_m}(t)F_Y(t)\\
	&+(F_{Y_1}(t)-F_{Y_1+W_1}(t))\int_0^{\infty}F_Y(y)f_{Z_m}(y)dy\\
	&+\int_0^t 1-F_{Z_m}(t-s)-F_Y(t-s)+F_{Z_m}(t-s)F_Y(t-s)dG(s)\\
	&+\int_0^t(F_{Y_1}(t-s)-F_{Y_1+W_1}(t-s))\int_0^{\infty}F_Y(y)f_{Z_m}(y)dydG(s),
\end{split}
\end{equation}
where $G(s)=\sum_{n=1}^\infty F_{\sum_{i=1}^{n} (Y_i+W_i)}(s)$. Also, we have
\begin{equation}
\label{eq:conditionalprobability1}
	F_{Y_1}(t)=\frac{P\bigg\{Y \leq t, Y\leq Z_m\bigg\}}{P\bigg\{Y\leq Z_m\bigg\}}=
	\begin{cases}
		1 & \text{if $t >Z_m$},\\
		\frac{P\{Y \leq t\}}{P\{Y\leq Z_m\}} & \text{if $0\leq t \leq Z_m$},\\
		0 & \text{if $t<0$},
	\end{cases}
\end{equation}
\begin{equation}
\label{eq:conditionalprobability2}
\begin{split}
	F_{Y_1+W_1}(t)&=\frac{P\bigg\{Y+W \leq t, Y\leq Z_m\bigg\}}{P\bigg\{Y\leq Z_m\bigg\}}\\
	&=
	\begin{cases}
		1 & \text{if $t >W+Z_m$},\\
		\frac{P\{Y+W \leq t\}}{P\{Y\leq Z_m\}} & \text{if $0\leq t \leq W+Z_m$},\\
		0 & \text{if $t<0$},
	\end{cases}
	\end{split}
\end{equation}
and
\begin{equation}
\label{eq:conditionalprobability3}
\begin{split}
	F_{\sum_{i=1}^{n} (Y_i+W_i)}(s)&=\frac{P\bigg\{\sum_{i=1}^{n} (Y^\star_i+W_i) \leq s, Y^\star_i\leq Z_{mi}\bigg\}}{P\bigg\{Y^\star_i\leq Z_{mi}\bigg\}}\\
	&=
	\begin{cases}
		1 & \text{if $t >\sum_{i=1}^{n}W_i+\sum_{i=1}^{n}Z_{mi}$},\\
		\frac{P\{\sum_{i=1}^{n} (Y^\star_i+W_i) \leq t\}}{P\{Y\leq Z_{m}\}} & \text{if $0\leq t \leq \sum_{i=1}^{n}W_i+\sum_{i=1}^{n}Z_{mi}$},\\
		0 & \text{if $t<0$},
	\end{cases}
	\end{split}
\end{equation}
where $Y^\star_i \overset{iid}{\sim}F_{Y}(y)$ and $Z_{mi}$ denotes the minimum time spend on hacking into the $m$ node during the $ith$ cycle. Thus, (\ref{eq:conditionalprobability1}), (\ref{eq:conditionalprobability2}), and (\ref{eq:conditionalprobability3}) can also be written as
\begin{eqnarray*}
	F_{Y_1}(t)&=&\frac{P(Y \leq t)}{P(Y\leq Z_m)}1_{\{0\leq t \leq Z_m\}}+1_{\{t > Z_m\}},\\
	F_{Y_1+W_1}(t)&=&\frac{P\{Y+W \leq t\}}{P\{Y\leq Z_m\}}1_{\{0\leq t \leq W+Z_m\}}+1_{\{t >W+Z_m\}},\\
F_{\sum_{i=1}^{n} (Y_i+W_i)}(s)&=&\frac{P\{\sum_{i=1}^{n} (Y^\star_i+W_i) \leq t\}}{P\{Y\leq Z_m\}} 1_{\{0\leq t \leq \sum_{i=1}^{n}W_i+\sum_{i=1}^{n}Z_{mi}\}}\\
&&+1_{\{t >\sum_{i=1}^{n}W_i+\sum_{i=1}^{n}Z_{mi}\}}.
\end{eqnarray*}

\subsection*{From Example~\ref{example1}}
Since $Y\sim\ex(\delta)$ and $W\sim\ex(\eta)$,  (\ref{eq:probability}) implies
\begin{align*}
	p_{mk}=P\bigg\{Y>Z_m\bigg\}&=\int_0^{\infty}\bigg(1-\prod_{j=1}^k\bigg [1-\frac{1}{\Gamma(m)}\gamma(m, \lambda_js)\bigg ]\bigg)\delta e^{-\delta s}ds,
\end{align*}
where $\gamma(m, \lambda_j s)=\int_0^{\lambda_j s}t^{m-1}e^{-t}dt$ is the lower incomplete gamma function.
By  (\ref{eq:Tm}), we have
\begin{align*}
	E[T_m]=&\frac{\int_0^\infty yf_Y(y) (1-F_{Z_m}(y) )dy+(1-\int_0^{\infty}F_{Z_m}(s)f_Y(s)ds)\int_0^\infty wf_W(w)dw}{\int_0^{\infty}F_{Z_m}(s)f_Y(s)ds}\\
	&+\frac{\int_0^\infty sf_{Z_m}(s) (1-F_Y(s))ds}{\int_0^{\infty}F_{Z_m}(s)f_Y(s)ds}\\
	=&\frac{\int_0^\infty y\delta e^{-\delta y }(1-(1-\prod_{j=1}^k\bigg [1-\frac{1}{\Gamma(m)}\gamma(m, \lambda_jy)\bigg ]))dy}{\int_0^\infty 1-\prod_{j=1}^k\bigg [1-\frac{1}{\Gamma(m)}\gamma(m, \lambda_js)\bigg ]\delta e^{-\delta s }ds
}
	\end{align*}
	
\begin{align*}
	&+\frac{(1-\int_0^\infty 1-\prod_{j=1}^k\bigg [1-\frac{1}{\Gamma(m)}\gamma(m, \lambda_js)\bigg ]
\delta e^{-\delta y }ds)\int_0^\infty w\eta e^{-\eta w }dw}{\int_0^\infty 1-\prod_{j=1}^k\bigg [1-\frac{1}{\Gamma(m)}\gamma(m, \lambda_js)\bigg ]\delta e^{-\delta s }ds
}\\
&+\frac{\int_0^\infty s(-\prod_{j=1}^k\bigg [1-\frac{1}{\Gamma(m)}\gamma(m, \lambda_js)\bigg ])'(1-(1-e^{-\delta s}))ds}{\int_0^\infty 1-\prod_{j=1}^k\bigg [1-\frac{1}{\Gamma(m)}\gamma(m, \lambda_js)\bigg ]\delta e^{-\delta s }ds
}\\
=&\frac{\int_0^\infty y\delta e^{-\delta y }\prod_{j=1}^k\bigg [1-\frac{1}{\Gamma(m)}\gamma(m, \lambda_jy)\bigg ]dy}{\int_0^\infty 1-\prod_{j=1}^k\bigg [1-\frac{1}{\Gamma(m)}\gamma(m, \lambda_js)\bigg ]\delta e^{-\delta s }ds
}\\
&+\frac{(1-\int_0^\infty 1-\prod_{j=1}^k\bigg [1-\frac{1}{\Gamma(m)}\gamma(m, \lambda_js)\bigg ]
\delta e^{-\delta y }ds)\int_0^\infty w\eta e^{-\eta w }dw}{\int_0^\infty 1-\prod_{j=1}^k\bigg [1-\frac{1}{\Gamma(m)}\gamma(m, \lambda_js)\bigg ]\delta e^{-\delta s }ds
}\\
&+\frac{\int_0^\infty s(-\prod_{j=1}^k\bigg [1-\frac{1}{\Gamma(m)}\gamma(m, \lambda_js)\bigg ])'e^{-\delta s}ds}{\int_0^\infty 1-\prod_{j=1}^k\bigg [1-\frac{1}{\Gamma(m)}\gamma(m, \lambda_js)\bigg ]\delta e^{-\delta s }ds
}.
\end{align*}
By  (\ref{eq:Pmt}), we have
\begin{eqnarray*}
	P_{mk}(t)
	&=&1-(1-\prod_{j=1}^k\bigg [1-\frac{1}{\Gamma(m)}\gamma(m, \lambda_jt)\bigg ])-(1-e^{-\delta t})\\
	&&+(1-\prod_{j=1}^k\bigg [1-\frac{1}{\Gamma(m)}\gamma(m, \lambda_jt)\bigg ])(1-e^{-\delta t})\\
	&&+(\frac{1-e^{-\delta t}}{P(Y\leq Z_m)}1_{\{0\leq t \leq Z_m\}}+1_{\{t > Z_m\}}-\frac{P\{Y+W \leq t\}}{P\{Y\leq Z_m\}}1_{\{0\leq t \leq W+Z_m\}}-1_{\{t >W+Z_m\}})\\
		&&\times \int_0^\infty (1-e^{-\delta y})(1-\prod_{j=1}^k\bigg [1-\frac{1}{\Gamma(m)}\gamma(m, \lambda_jy)\bigg ] )'dy\\
	&&+\int_0^\infty 1-(1-\prod_{j=1}^k\bigg [1-\frac{1}{\Gamma(m)}\gamma(m, \lambda_j(t-s))\bigg ])-(1-e^{-\delta (t-s)})\\
	&&+(1-\prod_{j=1}^k\bigg [1-\frac{1}{\Gamma(m)}\gamma(m, \lambda_j(t-s))\bigg ])(1-e^{-\delta (t-s)})dG(s)\\
	\end{eqnarray*}
	\begin{eqnarray*}
	&&+\int_0^t (\frac{1-e^{-\delta (t-s)}}{P(Y\leq Z_m)}1_{\{0\leq t-s \leq Z_m\}}+1_{\{t-s > Z_m\}}-\frac{P\{Y+W \leq t-s\}}{P\{Y\leq Z_m\}}1_{\{0\leq t-s \leq W+Z_m\}}\\
	&&-1_{\{t-s >W+Z_m\}})\int_0^\infty (1-e^{-\delta y})(1-\prod_{j=1}^k\bigg [1-\frac{1}{\Gamma(m)}\gamma(m, \lambda_jy)\bigg ] )'dydG(s)\\
	&=&\prod_{j=1}^k\bigg [1-\frac{1}{\Gamma(m)}\gamma(m, \lambda_jt)\bigg ]-(1-e^{-\delta t})\prod_{j=1}^k\bigg [1-\frac{1}{\Gamma(m)}\gamma(m, \lambda_jt)\bigg ]\\
	&&+(\frac{1-e^{-\delta t}}{P(Y\leq Z_m)}1_{\{0\leq t \leq Z_m\}}+1_{\{t > Z_m\}}-\frac{P\{Y+W \leq t\}}{P\{Y\leq Z_m\}}1_{\{0\leq t \leq W+Z_m\}}-1_{\{t >W+Z_m\}})\\
		&&\times \int_0^\infty (1-e^{-\delta y})(1-\prod_{j=1}^k\bigg [1-\frac{1}{\Gamma(m)}\gamma(m, \lambda_jy)\bigg ] )'dy\\
	&&+\int_0^\infty \prod_{j=1}^k\bigg [1-\frac{1}{\Gamma(m)}\gamma(m, \lambda_j(t-s))\bigg ]-(1-e^{-\delta (t-s)})\prod_{j=1}^k\bigg [1-\frac{1}{\Gamma(m)}\gamma(m, \lambda_j(t-s))\bigg ]dG(s)\\
	&&+\int_0^t (\frac{1-e^{-\delta (t-s)}}{P(Y\leq Z_m)}1_{\{0\leq t-s \leq Z_m\}}+1_{\{t-s > Z_m\}}-\frac{P\{Y+W \leq t-s\}}{P\{Y\leq Z_m\}}1_{\{0\leq t-s \leq W+Z_m\}}\\
	&&-1_{\{t-s >W+Z_m\}})\int_0^\infty (1-e^{-\delta y})(1-\prod_{j=1}^k\bigg [1-\frac{1}{\Gamma(m)}\gamma(m, \lambda_jy)\bigg ] )'dydG(s),
\end{eqnarray*}
where 
\begin{eqnarray*}
	\gamma(m, \lambda_j s)=\int_0^{\lambda_j s}z^{m-1}e^{-z}dz,
\end{eqnarray*}
\begin{eqnarray*}
	G(s)&=&\sum_{n=1}^\infty\frac{P\{\sum_{i=1}^{n} (Y^\star_i+W_i) \leq t\}}{\int_0^\infty (1-e^{-\delta y})(1-\prod_{j=1}^k\bigg [1-\frac{1}{\Gamma(m)}\gamma(m, \lambda_jt)\bigg ] )'dy}1_{0\leq t \leq \sum_{i=1}^{n}W_i+\sum_{i=1}^{n}Z_{mi}}\\
	&&+1_{t >\sum_{i=1}^{n}W_i+\sum_{i=1}^{n}Z_{mi}},
\end{eqnarray*}
and 
\begin{eqnarray*}
	f_{Z_m}(y)=\frac{F_{Z_m}(y)}{dy}=(1-\prod_{j=1}^k\bigg [1-\frac{1}{\Gamma(m)}\gamma(m, \lambda_jy)\bigg ] )'.
\end{eqnarray*}
Note that $P\{\sum_{i=1}^{n} (Y^\star_i+W_i) \leq t\}$ is the distribution function of the sum of two independent gamma random variables, $\sum_{i=1}^n Y_i ^\star \sim \gam(n, \delta)$ and  $\sum_{i=1}^n W_i\sim \gam(n, \eta)$.  Thus, Moschopoulos~(1985) implies
\begin{eqnarray*}
	P\{\sum_{i=1}^{n} (Y^\star_i+W_i) \leq t\}=C\sum_{k=0}^\infty \delta_l \int_0^t \frac{y^{2n+k-1}e^{-y(\delta \vee \eta)}(\delta \vee \eta)^{2n+l}}{\Gamma(2n+l)}dy,
\end{eqnarray*}
where
\begin{eqnarray*}
	C&=&\frac{(\delta \vee \eta)^{2n}}{(\delta\eta)^n},\\
	\delta_{l+1}&=&\frac{1}{l+1}\sum_{i=1}^{l+1}i\gamma_i\delta_{l+1-i},\\
\end{eqnarray*}
for $l=0,1,2,\dots$, and
\begin{eqnarray*}
	\gamma_j=\frac{n}{j}((1-\frac{\delta}{\delta \vee \eta})^j+(1-\frac{\eta}{\delta \vee \eta})^j),
\end{eqnarray*}
for $j=1,2,\dots$.
Moreover, 
\begin{eqnarray*}
	P(Y+W \leq t)=1-\frac{\eta}{\eta-\delta}e^{-\delta t}+\frac{\delta}{\eta-\delta}e^{-\eta t},
\end{eqnarray*}
and 
\begin{eqnarray*}
	P(Y\leq Z_m)&=&1-\int_0^{\infty}\bigg(1-\prod_{j=1}^k\bigg [1-\frac{1}{\Gamma(m)}\gamma(m, \lambda_js)\bigg ]\bigg)\delta e^{-\delta s}ds.\\
\end{eqnarray*}

\subsection*{From Example~\ref{example2}}

$\sum_{i=1}^m X_i^j \sim \gam(m\eta_j, \delta_j)$, $Y\sim \gam(\alpha, \beta)$, and $W\sim \gam(\theta, \tau)$. It is clear that 
\begin{eqnarray*}
	F_{Z_{m}}(z) 
	&=&1-\prod_{j=1}^k \bigg[1-P\bigg(\sum_{i=1}^m X_i^j \leq z\bigg)\bigg ]\\
	&=&1-\prod_{j=1}^k \bigg[1-\frac{1}{\Gamma(m\eta_j)}\gamma(m\eta_j, \delta_jz)\bigg ],
\end{eqnarray*}
\begin{eqnarray*}
	f_Y(y)=\frac{\beta^\alpha}{\Gamma(\alpha)}y^{\alpha-1}e^{-\beta y}, 
\end{eqnarray*}
\begin{eqnarray*}
	F_Y(y) = \frac{1}{\Gamma(\alpha)}\gamma(\alpha, \beta y),
\end{eqnarray*}
\begin{eqnarray*}
	f_W(w)=\frac{\tau^\theta}{\Gamma(\theta)}w^{\theta-1}e^{-\tau w},
\end{eqnarray*}
and
\begin{eqnarray*}
	F_W(w) = \frac{1}{\Gamma(\theta)}\gamma(\theta, \tau w).
\end{eqnarray*}
By  (\ref{eq:Tm}), we have
\begin{eqnarray*}
	E[T_m]&=&\frac{\int_0^\infty y\frac{\beta^\alpha}{\Gamma(\alpha)}y^{\alpha-1}e^{-\beta y} (1-(1-\prod_{j=1}^k \bigg[1-\frac{1}{\Gamma(m\eta_j)}\gamma(m\eta_j, \delta_jz)\bigg ]) )dy}{\int_0^{\infty}\bigg(1-\prod_{j=1}^k \bigg[1-\frac{1}{\Gamma(m\eta_j)}\gamma(m\eta_j, \delta_js)\bigg ]\bigg)\frac{\beta^\alpha}{\Gamma(\alpha)}s^{\alpha-1}e^{-\beta s}ds}\\
	&&+\frac{(1-\int_0^{\infty}\bigg(1-\prod_{j=1}^k \bigg[1-\frac{1}{\Gamma(m\eta_j)}\gamma(m\eta_j, \delta_js)\bigg ]\bigg)\frac{\beta^\alpha}{\Gamma(\alpha)}s^{\alpha-1}e^{-\beta s}ds)\int_0^\infty w\frac{\tau^\theta}{\Gamma(\theta)}w^{\theta-1}e^{-\tau w}dw}{\int_0^{\infty}\bigg(1-\prod_{j=1}^k \bigg[1-\frac{1}{\Gamma(m\eta_j)}\gamma(m\eta_j, \delta_js)\bigg ]\bigg)\frac{\beta^\alpha}{\Gamma(\alpha)}s^{\alpha-1}e^{-\beta s}ds}\\
	&&+\frac{\int_0^\infty s(-\prod_{j=1}^k \bigg[1-\frac{1}{\Gamma(m\eta_j)}\gamma(m\eta_j, \delta_js)\bigg ])'(1-\frac{1}{\Gamma(\alpha)}\gamma(\alpha, \beta s))ds}{\int_0^{\infty}\bigg(1-\prod_{j=1}^k \bigg[1-\frac{1}{\Gamma(m\eta_j)}\gamma(m\eta_j, \delta_js)\bigg ]\bigg)\frac{\beta^\alpha}{\Gamma(\alpha)}s^{\alpha-1}e^{-\beta s}ds}\\
	&=&\frac{\int_0^\infty y\frac{\beta^\alpha}{\Gamma(\alpha)}y^{\alpha-1}e^{-\beta y} \prod_{j=1}^k \bigg[1-\frac{1}{\Gamma(m\eta_j)}\gamma(m\eta_j, \delta_jz)\bigg ]dy}{\int_0^{\infty}\bigg(1-\prod_{j=1}^k \bigg[1-\frac{1}{\Gamma(m\eta_j)}\gamma(m\eta_j, \delta_js)\bigg ]\bigg)\frac{\beta^\alpha}{\Gamma(\alpha)}s^{\alpha-1}e^{-\beta s}ds}\\
	&&+\frac{\int_0^{\infty}\int_0^{\infty}\bigg(\prod_{j=1}^k \bigg[1-\frac{1}{\Gamma(m\eta_j)}\gamma(m\eta_j, \delta_js)\bigg ]\bigg)\frac{\beta^\alpha}{\Gamma(\alpha)}s^{\alpha-1}e^{-\beta s} w\frac{\tau^\theta}{\Gamma(\theta)}w^{\theta-1}e^{-\tau w}dsdw}{\int_0^{\infty}\bigg(1-\prod_{j=1}^k \bigg[1-\frac{1}{\Gamma(m\eta_j)}\gamma(m\eta_j, \delta_js)\bigg ]\bigg)\frac{\beta^\alpha}{\Gamma(\alpha)}s^{\alpha-1}e^{-\beta s}ds}\\
	&&+\frac{\int_0^\infty (-\prod_{j=1}^k \bigg[1-\frac{1}{\Gamma(m\eta_j)}\gamma(m\eta_j, \delta_js)\bigg ])'(s-\frac{s}{\Gamma(\alpha)}\gamma(\alpha, \beta s))ds}{\int_0^{\infty}\bigg(1-\prod_{j=1}^k \bigg[1-\frac{1}{\Gamma(m\eta_j)}\gamma(m\eta_j, \delta_js)\bigg ]\bigg)\frac{\beta^\alpha}{\Gamma(\alpha)}s^{\alpha-1}e^{-\beta s}ds}.\\
	P_{mk}(t)&=&1-(1-\prod_{j=1}^k \bigg[1-\frac{1}{\Gamma(m\eta_j)}\gamma(m\eta_j, \delta_jt)\bigg ])-\frac{1}{\Gamma(\alpha)}\gamma(\alpha, \beta t)\\
	&&+\bigg(1-\prod_{j=1}^k \bigg[1-\frac{1}{\Gamma(m\eta_j)}\gamma(m\eta_j, \delta_jt)\bigg ]\bigg)\frac{1}{\Gamma(\alpha)}\gamma(\alpha, \beta t)\\
	&&+(\frac{\frac{1}{\Gamma(\alpha)}\gamma(\alpha, \beta t)}{P(Y\leq Z_m)}1_{\{0\leq t \leq Z_m\}}+1_{\{t > Z_m\}}-\frac{P\{Y+W \leq t\}}{P\{Y\leq Z_m\}}1_{\{0\leq t \leq W+Z_m\}}-1_{\{t >W+Z_m\}})\\
		\end{eqnarray*}
\begin{eqnarray*}
	&&\times \int_0^{\infty}\frac{1}{\Gamma(\alpha)}\gamma(\alpha, \beta y)(-\prod_{j=1}^k \bigg[1-\frac{1}{\Gamma(m\eta_j)}\gamma(m\eta_j, \delta_jz)\bigg ])'dy\\
	&&+\int_0^t 1-(1-\prod_{j=1}^k \bigg[1-\frac{1}{\Gamma(m\eta_j)}\gamma(m\eta_j, \delta_j(t-s))\bigg ])-\frac{1}{\Gamma(\alpha)}\gamma(\alpha, \beta (t-s))\\
	&&+(1-\prod_{j=1}^k \bigg[1-\frac{1}{\Gamma(m\eta_j)}\gamma(m\eta_j, \delta_j(t-s))\bigg ])\frac{1}{\Gamma(\alpha)}\gamma(\alpha, \beta (t-s))dG(s)\\
	&&+\int_0^t(\frac{\frac{1}{\Gamma(\alpha)}\gamma(\alpha, \beta (t-s))}{P(Y\leq Z_m)}1_{\{0\leq t-s \leq Z_m\}}+1_{\{t-s > Z_m\}}-\frac{P\{Y+W \leq t-s\}}{P\{Y\leq Z_m\}}1_{\{0\leq t-s \leq W+Z_m\}}\\
	&&-1_{\{t-s >W+Z_m\}})\int_0^{\infty} \frac{1}{\Gamma(\alpha)}\gamma(\alpha, \beta y)(-\prod_{j=1}^k \bigg[1-\frac{1}{\Gamma(m\eta_j)}\gamma(m\eta_j, \delta_jz)\bigg ])'dydG(s)\\
	&=&\prod_{j=1}^k \bigg[1-\frac{1}{\Gamma(m\eta_j)}\gamma(m\eta_j, \delta_jt)\bigg ]-\frac{1}{\Gamma(\alpha)}\gamma(\alpha, \beta t)\\
	&&+\bigg(1-\prod_{j=1}^k \bigg[1-\frac{1}{\Gamma(m\eta_j)}\gamma(m\eta_j, \delta_jt)\bigg ]\bigg)\frac{1}{\Gamma(\alpha)}\gamma(\alpha, \beta t)\\
	&&+(\frac{\frac{1}{\Gamma(\alpha)}\gamma(\alpha, \beta t)}{P(Y\leq Z_m)}1_{\{0\leq t \leq Z_m\}}+1_{\{t > Z_m\}}-\frac{P\{Y+W \leq t\}}{P\{Y\leq Z_m\}}1_{\{0\leq t \leq W+Z_m\}}-1_{\{t >W+Z_m\}})\\
	&&\times \int_0^{\infty}\frac{1}{\Gamma(\alpha)}\gamma(\alpha, \beta y)(-\prod_{j=1}^k \bigg[1-\frac{1}{\Gamma(m\eta_j)}\gamma(m\eta_j, \delta_jz)\bigg ])'dy\\
	&&+\int_0^t \prod_{j=1}^k \bigg[1-\frac{1}{\Gamma(m\eta_j)}\gamma(m\eta_j, \delta_j(t-s))\bigg ]-\frac{1}{\Gamma(\alpha)}\gamma(\alpha, \beta (t-s))\\
	&&+(1-\prod_{j=1}^k \bigg[1-\frac{1}{\Gamma(m\eta_j)}\gamma(m\eta_j, \delta_j(t-s))\bigg ])\frac{1}{\Gamma(\alpha)}\gamma(\alpha, \beta (t-s))dG(s)\\
	&&+\int_0^t(\frac{\frac{1}{\Gamma(\alpha)}\gamma(\alpha, \beta (t-s))}{P(Y\leq Z_m)}1_{\{0\leq t-s \leq Z_m\}}+1_{\{t-s > Z_m\}}-\frac{P\{Y+W \leq t-s\}}{P\{Y\leq Z_m\}}1_{\{0\leq t-s \leq W+Z_m\}}\\
	&&-1_{\{t-s >W+Z_m\}})\int_0^{\infty} \frac{1}{\Gamma(\alpha)}\gamma(\alpha, \beta y)(-\prod_{j=1}^k \bigg[1-\frac{1}{\Gamma(m\eta_j)}\gamma(m\eta_j, \delta_jz)\bigg ])'dydG(s),
\end{eqnarray*}
where
\begin{eqnarray*}
	P\{Y+W \leq t\}=C\sum_{k=0}^\infty \delta_k \int_0^t \frac{y^{\alpha+\theta+k-1}e^{-y(\beta \vee \tau)}(\beta \vee \tau)^{\alpha+\theta+k}}{\Gamma(\alpha+\theta+k)}dy,
\end{eqnarray*}
and
\begin{eqnarray*}
	C&=&\frac{(\beta \vee \tau)^{2n}}{(\beta \tau)^n},\\
	\delta_{k+1}&=&\frac{1}{k+1}\sum_{i=1}^{k+1}i\gamma_i\delta_{k+1-i},
\end{eqnarray*}
for $k=0,1,2,\dots$, 
\begin{eqnarray*}
	\gamma_j=\frac{\alpha}{j}(1-\frac{\beta}{\beta \vee \tau})^j+\frac{\theta}{j}(1-\frac{\tau}{\beta \vee \tau})^j
\end{eqnarray*}
for $j=1,2,\dots$, and
\begin{eqnarray*}
	G(S)&=&\sum_{n=1}^\infty\frac{P\{\sum_{i=1}^{n} (Y^\star_i+W_i) \leq t\}}{P\{Y\leq Z_m\}} 1_{\{0\leq t \leq \sum_{i=1}^{n}W_i+\sum_{i=1}^{n}Z_{mi}\}}\\
&&+1_{\{t >\sum_{i=1}^{n}W_i+\sum_{i=1}^{n}Z_{mi}\}}.\\
\end{eqnarray*}
Since $\sum_{i=1}^n Y_i^\star\sim \gam(n\alpha, \beta)$ and $\sum_{i=1}^{n}W_i\sim \gam(n\theta, \tau)$,
\begin{eqnarray*}
	P\{\sum_{i=1}^{n} (Y^\star_i+W_i) \leq t\}=C\sum_{k=0}^\infty \delta_k \int_0^t \frac{y^{n\alpha+n\theta+k-1}e^{-y(\beta \vee \tau)}(\beta \vee \tau)^{n\alpha+n\theta+k}}{\Gamma(n\alpha+n\theta+k)}dy,
\end{eqnarray*}
where
\begin{eqnarray*}
	C&=&\frac{(\beta \vee \tau)^{2n}}{(\beta \tau)^n},\\
	\delta_{k+1}&=&\frac{1}{k+1}\sum_{i=1}^{k+1}i\gamma_i\delta_{k+1-i},
\end{eqnarray*}
for $k=0,1,2,\dots$, and
\begin{eqnarray*}
	\gamma_j=\frac{n\alpha}{j}(1-\frac{\beta}{\beta \vee \tau})^j+\frac{n\theta}{j}(1-\frac{\tau}{\beta \vee \tau})^j,
\end{eqnarray*}
for $j=1,2,\dots$.Moreover, 
\begin{eqnarray*}
	P(Y \leq Z_m)&=&1-\int_0^{\infty} F_{Z_m}(s) f_{Y}(s)ds\\
	&=&1-\int_0^\infty(1-\prod_{j=1}^k \bigg[1-\frac{1}{\Gamma(m\eta_j)}\gamma(m\eta_j, \delta_js)\bigg ])\frac{\beta^\alpha}{\Gamma(\alpha)}s^{\alpha-1}e^{-\beta s}ds.
\end{eqnarray*}

\subsection*{From Example~\ref{example3}}
Since $\sum_{i=1}^m X_i^j \sim \gam(m\eta_j, \delta_j)$, $Y\sim\mathsf{Weibull}(\alpha, \beta)$, and $W\sim \gam(\theta, \tau)$. We have
\begin{eqnarray*}
	F_{Z_{m}}(z) 
	&=&1-\prod_{j=1}^k \bigg[1-P\bigg(\sum_{i=1}^m X_i^j \leq z\bigg)\bigg ]\\
	&=&1-\prod_{j=1}^k \bigg[1-\frac{1}{\Gamma(m\eta_j)}\gamma(m\eta_j, \delta_jz)\bigg ],
\end{eqnarray*}
\begin{eqnarray*}
	f_Y(y)=\frac{\beta}{\alpha}\left(\frac{y}{\alpha}\right)^{\beta-1}e^{-\left(y/\alpha\right)^\beta},
\end{eqnarray*}
\begin{eqnarray*}
	F_Y(y) = 1-e^{-(y/\alpha)^\beta},
\end{eqnarray*}
\begin{eqnarray*}
	f_W(w)=\frac{\tau^\theta}{\Gamma(\theta)}w^{\theta-1}e^{-\tau w},
\end{eqnarray*}
and
\begin{eqnarray*}
	F_W(w) = \frac{1}{\Gamma(\theta)}\gamma(\theta, \tau w).
\end{eqnarray*}
We have
\begin{eqnarray*}
	E[T_m]&=&\frac{\int_0^\infty y\frac{\beta}{\alpha}\left(\frac{y}{\alpha}\right)^{\beta-1}e^{-\left(y/\alpha\right)^\beta} (1-(1-\prod_{j=1}^k \bigg[1-\frac{1}{\Gamma(m\eta_j)}\gamma(m\eta_j, \delta_jy)\bigg ]) )dy}{\int_0^{\infty}(1-\prod_{j=1}^k \bigg[1-\frac{1}{\Gamma(m\eta_j)}\gamma(m\eta_j, \delta_js)\bigg ])\frac{\beta}{\alpha}\left(\frac{y}{\alpha}\right)^{\beta-1}e^{-\left(s/\alpha\right)^\beta}ds}\\
	&&+\frac{(1-\int_0^{\infty}(1-\prod_{j=1}^k \bigg[1-\frac{1}{\Gamma(m\eta_j)}\gamma(m\eta_j, \delta_js)\bigg ])f_Y(s)ds)\int_0^\infty w\frac{\tau^\theta}{\Gamma(\theta)}w^{\theta-1}e^{-\tau w}dw}{\int_0^{\infty}(1-\prod_{j=1}^k \bigg[1-\frac{1}{\Gamma(m\eta_j)}\gamma(m\eta_j, \delta_js)\bigg ])\frac{\beta}{\alpha}\left(\frac{y}{\alpha}\right)^{\beta-1}e^{-\left(s/\alpha\right)^\beta}ds}\\
	&&+\frac{\int_0^\infty (-\prod_{j=1}^k \bigg[1-\frac{1}{\Gamma(m\eta_j)}\gamma(m\eta_j, \delta_js)\bigg ])'(1-(1-e^{-(s/\alpha)^\beta}))ds}{\int_0^{\infty}(1-\prod_{j=1}^k \bigg[1-\frac{1}{\Gamma(m\eta_j)}\gamma(m\eta_j, \delta_js)\bigg ])\frac{\beta}{\alpha}\left(\frac{y}{\alpha}\right)^{\beta-1}e^{-\left(s/\alpha\right)^\beta}ds}\\
		\end{eqnarray*}
\begin{eqnarray*}
	&=&\frac{\int_0^\infty y\frac{\beta}{\alpha}\left(\frac{y}{\alpha}\right)^{\beta-1}e^{-\left(y/\alpha\right)^\beta} \prod_{j=1}^k \bigg[1-\frac{1}{\Gamma(m\eta_j)}\gamma(m\eta_j, \delta_jy)\bigg ]dy}{\int_0^{\infty}(1-\prod_{j=1}^k \bigg[1-\frac{1}{\Gamma(m\eta_j)}\gamma(m\eta_j, \delta_js)\bigg ])\frac{\beta}{\alpha}\left(\frac{y}{\alpha}\right)^{\beta-1}e^{-\left(s/\alpha\right)^\beta}ds}\\
	&&+\frac{(1-\int_0^{\infty}(1-\prod_{j=1}^k \bigg[1-\frac{1}{\Gamma(m\eta_j)}\gamma(m\eta_j, \delta_js)\bigg ])f_Y(s)ds)\int_0^\infty w\frac{\tau^\theta}{\Gamma(\theta)}w^{\theta-1}e^{-\tau w}dw}{\int_0^{\infty}(1-\prod_{j=1}^k \bigg[1-\frac{1}{\Gamma(m\eta_j)}\gamma(m\eta_j, \delta_js)\bigg ])\frac{\beta}{\alpha}\left(\frac{y}{\alpha}\right)^{\beta-1}e^{-\left(s/\alpha\right)^\beta}ds}\\
	&&+\frac{\int_0^\infty (-\prod_{j=1}^k \bigg[1-\frac{1}{\Gamma(m\eta_j)}\gamma(m\eta_j, \delta_js)\bigg ])'e^{-(s/\alpha)^\beta}ds}{\int_0^{\infty}(1-\prod_{j=1}^k \bigg[1-\frac{1}{\Gamma(m\eta_j)}\gamma(m\eta_j, \delta_js)\bigg ])\frac{\beta}{\alpha}\left(\frac{y}{\alpha}\right)^{\beta-1}e^{-\left(s/\alpha\right)^\beta}ds}.
\end{eqnarray*}
Also, 
\begin{eqnarray*}
		P_{mk}(t)&=&1-(1-\prod_{j=1}^k \bigg[1-\frac{1}{\Gamma(m\eta_j)}\gamma(m\eta_j, \delta_jt)\bigg ])-(1-e^{-(t/\alpha)^\beta})\\
		&&+(1-\prod_{j=1}^k \bigg[1-\frac{1}{\Gamma(m\eta_j)}\gamma(m\eta_j, \delta_jt)\bigg ])(1-e^{-(t/\alpha)^\beta})\\
	&&+(\frac{1-e^{-(t/\alpha)^\beta}}{P(Y\leq Z_m)}1_{\{0\leq t \leq Z_m\}}+1_{\{t > Z_m\}}-\frac{P\{Y+W \leq t\}}{P\{Y\leq Z_m\}}1_{\{0\leq t \leq W+Z_m\}}-1_{\{t >W+Z_m\}})\\
	&&\times\int_0^{\infty}(1-e^{-(y/\alpha)^\beta})(-\prod_{j=1}^k \bigg[1-\frac{1}{\Gamma(m\eta_j)}\gamma(m\eta_j, \delta_jz)\bigg ])'dy\\
	&&+\int_0^t 1-(1-\prod_{j=1}^k \bigg[1-\frac{1}{\Gamma(m\eta_j)}\gamma(m\eta_j, \delta_j(t-s))\bigg ])-(1-e^{-((t-s)/\alpha)^\beta})\\
	&&+(1-\prod_{j=1}^k \bigg[1-\frac{1}{\Gamma(m\eta_j)}\gamma(m\eta_j, \delta_j(t-s))\bigg ])(1-e^{-((t-s)/\alpha)^\beta})dG(s)\\
	&&+\int_0^t(\frac{1-e^{-((t-s)/\alpha)^\beta}}{P(Y\leq Z_m)}1_{\{0\leq t-s \leq Z_m\}}+1_{\{t-s > Z_m\}}\\
	&&-\frac{P\{Y+W \leq t-s\}}{P\{Y\leq Z_m\}}1_{\{0\leq t-s \leq W+Z_m\}}-1_{\{t-s >W+Z_m\}})\\
	&&\times \int_0^{\infty}1-e^{-(y/\alpha)^\beta}(-\prod_{j=1}^k \bigg[1-\frac{1}{\Gamma(m\eta_j)}\gamma(m\eta_j, \delta_jz)\bigg ])'dydG(s)\\
		\end{eqnarray*}
\begin{eqnarray*}
	&=&\prod_{j=1}^k \bigg[1-\frac{1}{\Gamma(m\eta_j)}\gamma(m\eta_j, \delta_jt)\bigg ]-1+e^{-(t/\alpha)^\beta}\\
		&&+(1-\prod_{j=1}^k \bigg[1-\frac{1}{\Gamma(m\eta_j)}\gamma(m\eta_j, \delta_jt)\bigg ])(1-e^{-(t/\alpha)^\beta})\\
	&&+(\frac{1-e^{-(t/\alpha)^\beta}}{P(Y\leq Z_m)}1_{\{0\leq t \leq Z_m\}}+1_{\{t > Z_m\}}-\frac{P\{Y+W \leq t\}}{P\{Y\leq Z_m\}}1_{\{0\leq t \leq W+Z_m\}}-1_{\{t >W+Z_m\}})\\
	&&\times\int_0^{\infty}(1-e^{-(y/\alpha)^\beta})(-\prod_{j=1}^k \bigg[1-\frac{1}{\Gamma(m\eta_j)}\gamma(m\eta_j, \delta_jz)\bigg ])'dy\\
	&&+\int_0^t \prod_{j=1}^k \bigg[1-\frac{1}{\Gamma(m\eta_j)}\gamma(m\eta_j, \delta_j(t-s))\bigg ]-1+e^{-((t-s)/\alpha)^\beta}\\
	&&+(1-\prod_{j=1}^k \bigg[1-\frac{1}{\Gamma(m\eta_j)}\gamma(m\eta_j, \delta_j(t-s))\bigg ])(1-e^{-((t-s)/\alpha)^\beta})dG(s)\\
	&&+\int_0^t(\frac{1-e^{-((t-s)/\alpha)^\beta}}{P(Y\leq Z_m)}1_{\{0\leq t-s \leq Z_m\}}+1_{\{t-s > Z_m\}}\\
	&&-\frac{P\{Y+W \leq t-s\}}{P\{Y\leq Z_m\}}1_{\{0\leq t-s \leq W+Z_m\}}-1_{\{t-s >W+Z_m\}})\\
	&&\times \int_0^{\infty}(1-e^{-(y/\alpha)^\beta})(-\prod_{j=1}^k \bigg[1-\frac{1}{\Gamma(m\eta_j)}\gamma(m\eta_j, \delta_jz)\bigg ])'dydG(s).
\end{eqnarray*}
%For $G(s)$, although there is no formula of distribution of sum of Weibull variables, we still could use numerical methods to have the solution.

%
\end{document}